\newtheorem{definition}{Definition}
\newtheorem{theorem}   {Theorem}
\newtheorem{lemma}       [theorem]{Lemma}
\newtheorem{corollary}   [theorem]{Corollary}
\newtheorem*{theorem*}{Theorem}
\newenvironment{manuallemma}[1]{%
\manuallemmainner
}{\endmanuallemmainner\addtocounter{theorem}{-1}}
\newenvironment{manualtheorem}[1]{%
\manualtheoreminner
}{\endmanualtheoreminner\addtocounter{theorem}{-1}}
\newcommand{\norm}[1]{\lVert#1\rVert}
\newcommand{\Norm}[1]{\left\lVert#1\right\rVert}
\newcommand{\dotcup}{\mathbin{\dot{\cup}}}
\newcommand{\RouteArg}[4][1pt]{{#3}\xxrightarrow[#1]{#2}{#4}}
\newcommand{\Route}[2]{{#1}\rightarrow{#2}}
\definecolor{niceredbright}{HTML}{bd0310}
\definecolor{nicebluebright}{HTML}{197b9b}
\definecolor{nicered}{HTML}{7f0a13}
\definecolor{niceblue}{HTML}{104354}
\definecolor{nicegreen}{HTML}{217516}
\definecolor{nicepurple}{HTML}{884bab}
\definecolor{nicebg}{HTML}{f6f0e4}
\definecolor{niceredlight}{HTML}{c9888d}
\definecolor{nicebluelight}{HTML}{78a4b8}
\definecolor{nicegreenlight}{HTML}{76de68}
\definecolor{nicepurplelight}{HTML}{bc87db}
\title{Compact Oblivious Routing in Weighted Graphs}
\author{Philipp Czerner, Harald Räcke}
\affil{\{czerner, raecke\}@in.tum.de\\
Department of Informatics, TU München, Germany}
\begin{document}
\def\tildeO{\tilde{\mathcal{O}}}
\def\deg{\operatorname{deg}}
\def\height{\operatorname{height}}
\def\hrnote #1{{\color{red} #1}}
\def\pcnote #1{{\color{nicepurple} #1}}
\def\tildeO{\tilde{\mathcal{O}}}

\maketitle

\begin{abstract}
The space-requirement for routing-tables is an important characteristic of
routing schemes. For the cost-measure of minimizing the total network load
there exist a variety of results that show tradeoffs between stretch and
required size for the routing tables. This paper designs compact routing
schemes for the cost-measure congestion, where the goal is to minimize the
maximum relative load of a link in the network (the relative load of a link is
its traffic divided by its bandwidth). We show that for arbitrary undirected
graphs we can obtain oblivious routing strategies with competitive ratio
$\tildeO(1)$ that have header length $\tildeO(1)$, label size $\tildeO(1)$,
and require routing-tables of size $\tildeO(\deg(v))$ at each vertex $v$ in the
graph.

This improves a result of Räcke and Schmid who proved a similar result in
\emph{unweighted} graphs.
\end{abstract}

\def\capac{w}
\def\polylog{\operatorname{polylog}}

\section{Introduction}\label{sec:introduction}
\label{sec:intro}

Oblivious routing strategies choose routing paths independent of the traffic in
the network and are therefore usually much easier to implement than adaptive
routing solutions that might require centralized control and/or lead to
frequent reconfigurations of traffic routes. Because of this simplicity a lot
of research in recent years has been performed on the question whether the
quality of route allocations performed by oblivious algorithms is comparable to
that of adaptive solutions (see e.g.\
\cite{azar2004optimal,borodin1985routing,ip-ton,kodialam2009oblivious,semi,racke2002minimizing,racke2008optimal,towles2002worst}). For some cost-metrics this is indeed the case. For
example for minimizing the total traffic in the network (a.k.a. total load),
shortest path routing is a simple optimal oblivious strategy. When one aims to
minimize the congestion, i.e., the maximum (relative) load of a network link,
one can still obtain strategies with a competitive ratio of $\mathcal{O}(\log n)$, i.e.,
the congestion generated by these strategies is at most an $\mathcal{O}(\log n)$-factor
than the best possible congestion~\cite{racke2008optimal}.

However, another important aspect for implementing oblivious routing strategies
on large networks is the size of the required routing tables. This aspect has
been investigated thoroughly for the cost-measure total load~(see e.g.\
\cite{cowen2001compact,fraigniaud1995memory,frederickson1988designing,krioukov2004compact,retvari2013compact,thorup2001compact,van1995compact}), and various
trade-offs between competitive ratio (also called stretch for the total load
scenario) and the table-size have been discovered.

If for example every vertex stores the next hop on a shortest path to a target
one can obtain a stretch of $1$ at the cost of having routing tables of size
$\mathcal{O}(n\log n)$ per node. If one allows non-optimal solutions Thorup and Zwick~\cite{thorup2001compact}
have shown how to obtain a stretch of $4k-5$ for any $k>2$ with routing tables
of size $\tildeO(n^{1/k})$. This routing scheme works for the so-called
labeled scenario in which the designer of the routing-scheme is allowed to
relabel the vertices of the network in order to make routing decisions easier.
Of course, there is still a restriction on the label-size as otherwise
the power of being able to assign labels to vertices could be abused.

In the (more difficult) so-called \emph{name-independent} model the designer is
not allowed to relabel the vertices. Abraham et al.~\cite{ittai} have shown that
for general undirected graphs one can asymptotically match the bounds for the
labeled variant. They obtain a stretch of $\mathcal{O}(k)$ and routing tables of size
$\tildeO(n^{1/k})$. If a famous conjecture due to Erd\H{o}s~\cite{Erd63} about
the existence of low-girth graphs holds then there is also a lower bound that
says that obtaining a stretch better than $2k+1$ requires routing tables of
size $\Omega(n^{1/k})$. This means that for general undirected graphs the
existing tradeoffs between stretch and space are fairly tight. 

There exist many more results that analyze problem variants as e.g.\ obtained
by restricting the graph representing the network (see
e.g.~\cite{krioukov2004compact,cowen2001compact,fraigniaud2001routing,gavoille2001routing,gavoille1996memory,retvari2013compact});
so the problem of designing compact routing schemes is very well studied for
the cost-measure \emph{total load}.

However, for the cost-measure congestion this is not the case. Räcke and
Schmid~\cite{raecke2018compact} gave the first oblivious routing scheme that combines
a guarantee w.r.t.\ the congestion with small routing-tables. They consider
the labeled model and design an oblivious routing scheme that for a general
undirected, unweighted graph $G$ requires routing tables of size
$\tildeO(\deg(v))$ at each vertex $v$ and obtains a competitive ratio of
$\tildeO(1)$ w.r.t.\ congestion.

There are important differences when comparing this result to its counter-parts
for the total-load scenario. Firstly, the space used at a vertex $v$ may depend on the
degree of $v$. This is a reasonable assumption from a practical perspective as
a node corresponds to a router in the network and it is reasonable to assume
that the memory at a router (node) grows with the number of ports (number of
incident edges). However, this assumption seems also crucial for getting any
reasonable guarantees. In order to minimize congestion it is important to
distribute the traffic among all network resources. It seems very difficult to
do this if the routing table at a vertex is a lot smaller than the number of
outgoing edges.

Another difference is that there is no tradeoff parameter $k$ that gives a
smooth transition from optimal routing with large tables to more compact
routing. The reason is that for \emph{congestion} the competitive ratio may be
$\Omega(\log n)$ even for unlimited routing tables~\cite{bartal1997line}.

One important shortcoming of the result by Schmid and Räcke~\cite{raecke2018compact} is
that it only applies to unweighted graphs (there is a straightforward
generalization that obtains routing tables of size $\mathcal{O}(W\polylog n)$ where $W$
is the largest weight of an edge, but this is undesirable). This restriction is
% --- BEFORE ---
%due to the fact that the result by Schmid and Räcke uses randomized rounding of
%paths as an important sub-routine. In graphs where all weights are equal this
%approach leads to a small number of paths traversing any edge; and then routing
%tables for these path can be stored compactly. However, in graphs with edges of
%large weight $W$ such an approach might lead to $W$ small paths using an edge
%of large weight. This can then not be stored in a compact manner.
%
%In this paper we give a construction of an oblivious routing scheme that does
%not involve direct randomized rounding of paths and that obtains a
% --- AFTER ---
due to the fact that the result by Schmid and Räcke uses \emph{paths} to route within
well-connected clusters, which they obtain by randomized rounding.

There are two major obstacles in generalizing this result to the weighted case:
\begin{enumerate}[(1)]
\item In unweighted graphs, low congestion also ensures a low number of paths using an edge. However, an edge of weight $W$ might be used by $W$ small paths, which cannot be stored in a compact manner.
\item Even within a well-connected cluster, it is not sufficient to route a commodity using a small number of paths, if the nodes are connected by many low-weight edges (illustrated in Figure \ref{fig:bigoversmall}). Hence a source node may have to route (and store) $W$ small paths.
\end{enumerate}

In this paper we give a construction of an oblivious routing scheme that avoids both problems, by storing aggregate routing information for many paths at once, as well as distributing storage across nodes for commodities that need to spread out over multiple paths. In this manner we obtain a
% --- END ---
polylogarithmic competitive ratio with polylogarithmic space requirement per
edge in the network. Our main result is the following.

\begin{manualtheorem}{\ref{thm:finalresult}}
There is a compact oblivious routing scheme with competitive ratio at most
$\mathcal{O}(\log^6n\log^3W)$, that uses routing tables of size
$\mathcal{O}(\log^5n\log W\log^3(nW)\cdot\deg(v))$ at a node $v$, packet
headers of length $\mathcal{O}(\log^3(nW))$, and node labels of length
$\mathcal{O}(\log^2n)$.
\end{manualtheorem}
In particular this result shows that if we can route some demand in a network
with a multicommodity flow $f$ of congestion $C$, then it is possible to route the
demands \emph{space-efficiently}, i.e., one can set up small routing tables so that
packets follow a (maybe) different flow $f'$ that routes the same demands with
a slightly worse congestion.
This question of space-efficiently routing demands in a network is orthogonal
to oblivious routing and it is not clear by how much the performance (i.e., the
congestion) degrades because of the space-requirement. The above theorem gives
a polylogarithmic upper bound but to the best of our knowledge this problem has
not been studied before.

\begin{figure}[t]
\begin{center}
\def\svgwidth{12.7cm}%% Creator: Inkscape inkscape 0.92.3, www.inkscape.org
%% PDF/EPS/PS + LaTeX output extension by Johan Engelen, 2010
%% Accompanies image file 'bigsmallsimple.pdf' (pdf, eps, ps)
%%
%% To include the image in your LaTeX document, write
%%   \input{<filename>.pdf_tex}
%%  instead of
%%   \includegraphics{<filename>.pdf}
%% To scale the image, write
%%   \def\svgwidth{<desired width>}
%%   \input{<filename>.pdf_tex}
%%  instead of
%%   \includegraphics[width=<desired width>]{<filename>.pdf}
%%
%% Images with a different path to the parent latex file can
%% be accessed with the `import' package (which may need to be
%% installed) using
%%   \usepackage{import}
%% in the preamble, and then including the image with
%%   \import{<path to file>}{<filename>.pdf_tex}
%% Alternatively, one can specify
%%   \graphicspath{{<path to file>/}}
%% 
%% For more information, please see info/svg-inkscape on CTAN:
%%   http://tug.ctan.org/tex-archive/info/svg-inkscape
%%
\begingroup%
  \makeatletter%
  \providecommand\color[2][]{%
    \errmessage{(Inkscape) Color is used for the text in Inkscape, but the package 'color.sty' is not loaded}%
    \renewcommand\color[2][]{}%
  }%
  \providecommand\transparent[1]{%
    \errmessage{(Inkscape) Transparency is used (non-zero) for the text in Inkscape, but the package 'transparent.sty' is not loaded}%
    \renewcommand\transparent[1]{}%
  }%
  \providecommand\rotatebox[2]{#2}%
  \newcommand*\fsize{\dimexpr\f@size pt\relax}%
  \newcommand*\lineheight[1]{\fontsize{\fsize}{#1\fsize}\selectfont}%
  \ifx\svgwidth\undefined%
    \setlength{\unitlength}{416.69291339bp}%
    \ifx\svgscale\undefined%
      \relax%
    \else%
      \setlength{\unitlength}{\unitlength * \real{\svgscale}}%
    \fi%
  \else%
    \setlength{\unitlength}{\svgwidth}%
  \fi%
  \global\let\svgwidth\undefined%
  \global\let\svgscale\undefined%
  \makeatother%
  \begin{picture}(1,0.19047619)%
    \lineheight{1}%
    \setlength\tabcolsep{0pt}%
    \put(0,0){\includegraphics[width=\unitlength,page=1]{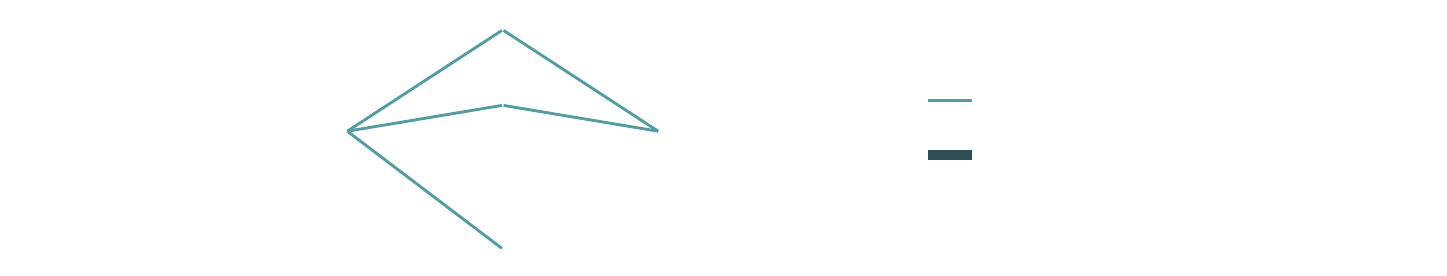}}%
    \put(0.69222424,0.11274138){\color[rgb]{0,0,0}\makebox(0,0)[lt]{\lineheight{1.25}\smash{\begin{tabular}[t]{l}edge (weight 1)\end{tabular}}}}%
    \put(0.69222424,0.07509411){\color[rgb]{0,0,0}\makebox(0,0)[lt]{\lineheight{1.25}\smash{\begin{tabular}[t]{l}edge (weight $W$)\end{tabular}}}}%
    \put(0,0){\includegraphics[width=\unitlength,page=2]{bigsmallsimple.pdf}}%
    \put(0.53549454,0.1286087){\color[rgb]{0,0,0}\makebox(0,0)[lt]{\lineheight{1.25}\smash{\begin{tabular}[t]{l}$b$\end{tabular}}}}%
    \put(0.14611235,0.1286087){\color[rgb]{0,0,0}\makebox(0,0)[lt]{\lineheight{1.25}\smash{\begin{tabular}[t]{l}$a$\end{tabular}}}}%
  \end{picture}%
\endgroup%

\end{center}
\caption{Routing a single demand over multiple edges. Sending data from $a$ to $b$ requires roughly $W$ paths, but $a$ has degree 1 and can store only $\tildeO(1)$ bits.}
\label{fig:bigoversmall}
\end{figure}

\subsection{Further Work}
Oblivious routing with the goal of either minimizing the total load (or
stretch), minimizing the congestion or a combination of both is a well studied
problem. The research started with deterministic algorithms and it was shown
by Borodin and Hopcroft~\cite{borodin1985routing} that on \emph{any} bounded degree
graph $G$ for \emph{any} deterministic routing scheme there exists a permutation
routing instance that incurs congestion $\Omega(\sqrt{n}/\Delta^{3/2})$. 
This result was improved by Kaklamanis et al.~\cite{kaklamanis1991tight}
to a lower bound of $\Omega(\sqrt{n}/\Delta)$. As there exist bounded degree
graphs that can route any permutation with small congestion this gives a large lower
bound on the competitive ratio of deterministic oblivious routing schemes.

For randomized algorithms Valiant and Brebner~\cite{VB81} showed how to obtain
a polylogarithmic competitive ratio for the hypercube by routing to random
intermediate destinations (known as Valiant's trick).

R{\"a}cke~\cite{racke2002minimizing} presented the first oblivious routing scheme with a
polylogarithmic competitive ratio of $\mathcal{O}(\log^3{n})$ in general undirected
networks. This routing scheme is based on a hierarchical decomposition of
a graph and forms the basis for the compact routing schemes that we construct
in this paper.  The construction in~\cite{racke2002minimizing} was not polynomial time. This
drawback was independently addressed by Bienkowski et
al.~\cite{bienkowski2003practical} and Harrelson
et al.~\cite{harrelson2003polynomial}. Both papers give a polynomial-time
algorithm for constructing the hierarchical decomposition (and, hence, the
routing scheme)--- the first with a competitive ratio of $\mathcal{O}(\log^4n)$, and
the second with a competitive ratio of $\mathcal{O}(\log^2n\log\log n)$.

In 2014 R{\"a}cke et al.~\cite{racke2014computing} presented another
construction of the hierarchy that runs in time $\mathcal{O}(m\polylog n)$ and guarantees
a competitive ratio of $\mathcal{O}(\log^5 n)$ (however, going from the hierarchy to the
actual routing scheme may require superlinear time).

The above oblivious routing schemes that are based on hierarchical tree
decompositions do not give the best possible competitive ratio.
In~\cite{racke2008optimal} Räcke presents an oblivious routing scheme that is
based on embedding a convex combination of trees into the graph $G$. This
scheme obtains a competitive ratio of $\mathcal{O}(\log n)$, which is optimal due to a
lower bound of Bartal and Leonardi for online routing in
grids~\cite{bartal1997line}.

However, the number of trees that are used in the above
result~\cite{racke2008optimal} is fairly large ($\Theta(m)$). Therefore, it
seems difficult to design a compact routing scheme based on the tree embedding
approach, and, therefore we use the earlier results that are based on
hierarchical decompositions (a single tree!) but only guarantee slightly weaker
competitive ratios.

% !TeX spellcheck = en_US
%!TeX root = main_lipics.tex

\def\bold #1{{\bfseries\mathversion{bold}#1}}

\def\deg{\operatorname{deg}}
\def\height{\operatorname{height}}
\def\bal{\operatorname{bal}}
\def\out{\operatorname{out}}
\def\strich{$'$}
\def\ell{l}

\NewDocumentCommand{\xxrightarrow}{ O{}m }{%
\mathrel{%
\vcenter{\hbox{%
\begin{tikzpicture}%
  \node[anchor=south,align=center,inner sep=0pt] (a) {};
  \node at (a) [minimum width=0.6cm,anchor=south,overlay,inner ysep=1pt, inner xsep=4pt,yshift=-#1] (b) {$\scriptstyle #2$};
  \coordinate (left)  at ($(b.south west)+(-0.0cm,0cm)$);
  \coordinate (right) at ($(b.south east)+ (0.0cm,0cm)$);
  \draw[->,overlay] (left) -- (right);
  \path (left |- a) -- (right |- a);
\end{tikzpicture}%
}}%
}%
}
\def\hrnote #1{{\color{red} #1}}
\def\pcnote #1{{\color{nicepurple} #1}}
\def\tildeO{\tilde{\mathcal{O}}}

% ToDo List
%  - Multi-commodity / multicommodity
%  - level -> class
%  - congestion at most
%  - Delete description on top of GGE lemma, maybe?
% Fragen und Antworten
%  - Inside a cluster
%  - Kanten orientierung, congestion fluss definition
%  - Sum of all weights

% Deadlines
% ICALP
%  - 18.02.2020, 7am CET
%   + Extended abstract
%   + at most 12 pages (excluding references, front page(s) i.e. authors, affiliation, keywords, abstract)
%   + LIPIcs style https://submission.dagstuhl.de/documentation/authors

% Structure Draft
%  - Introduction
%   + Explain compact oblivious routing
%  - Related work1
%   + Notation
%  - Overview
%   + Explain transformation schemes
%   + Main result
%   + High-level explanation of lemmata
%   + How to show the main theorem using the lemmata
%  - Tools
%   + SCF routing
%   + KRV routing
%   + Arbitrary edge routing
%   + Hypercube embedding
%  - Rest of the proof
%   + Routing between cluster and children
%   + Putting things together
%  - (Complicated hypercube) 

%\newcommand{\norm}[1]{\lVert#1\rVert}
%\newcommand{\Norm}[1]{\left\lVert#1\right\rVert}
\def\cong{\operatorname{cong}}
\newcommand{\ca}{w} % edge capacities
\newcommand{\we}{c} % node weights
\newcommand{\Co}{C} % PMCF congestion
\newcommand{\Sc}{x} % scaling factor
\newcommand{\bwe}{\bar{\we}} % node weights, normalised
\newcommand{\Nclass}{N_\mathrm{class}} % number of classes

\subsection{Preliminaries}

Throughout the paper we use $G=(V,E,\ca)$ to denote an undirected weighted graph
with $n$ node and $m$ edges. We will
refer to the weight of an edge also as the \emph{capacity} of the edge. Wlog.\
we assume that the minimum edge weight is $1$, that edge-weights are a power of
$2$, and that the largest edge weight is $W$.
We call an edge of capacity/weight $2^i$ a \emph{class} $i$ edge and use 
$\Nclass:=1+\log_2W$ to denote the total number of classes.
Further, we use $\Gamma(v)$ to denote the neighborhood of a vertex $v$, i.e.,
$\Gamma(v)=\{u\in V\mid \{u,v\}\in E\}$.

The degree of a node $v$ in the graph $G$ will be referred to as $\deg_G(v)$,
that is $\deg_G(v):=|\Gamma(v)|$. We apply that to directed graphs as
well, where it refers to the number of outgoing edges.

%$\min_{e\in E}\ca(e)=1$, and
%that the weights $\ca$ only take values in $\{2^0,...,2^{k-1}\}$ for some $k$.
%We call an edge with weight $2^i$ a \emph{class $i$} edge.
% 
%Additionally, we set $W:=2^k$, which is both an upper bound on the weight of
%any edge and allows us to write the number of classes as $\log_2 W$.

\def\Eor{E_{\mathrm{or}}}

While the edges $E$ are undirected, it will be convenient to refer to a certain
orientation of an edge, so we define $\Eor:=\{(u,v)\in V^2:\{u,v\}\in E\}$.
%A mapping $f:E\rightarrow\mathbb{R}$ is called a (single-commodity) flow. If
%$f(u,v)>0$ for some edge $(u,v)\in E$, this indicates flow from $u$ to $v$, and
%$f(v,u)=-f(u,v)$. Hence the reverse flow of $f$ is simply $-f$.
% 
%A flow $f$ may have multiple sources and sinks. The balance of a node $v\in V$
%is denoted by $\operatorname{bal}_f(v):=\sum_{(u,v)\in E}f(u,v)$, so a positive
%balance indicates that the node is receiving more flow than sending out. A flow
%is \emph{acyclic}, if there is no path $(p_0,...,p_k)$ in $G$ with $p_0=p_k$
%and $f(p_i,p_{i+1})>0$ for all $i$.
% 
%Its congestion is the maximum ratio between the flow over an edge and its
%weight, denoted by $\cong(f):=\max_{e\in E} |f(e)|/\ca(e)$. Given a
%multi-set of flows $F:=\{f_1, f_2, ..., f_k\}$, its total congestion is
%$\cong(F):=\max_{e\in E}\sum_{k} |f_k(e)|/{w(e)}$.
A mapping $f:\Eor\rightarrow\mathbb{R}$ with $f((u,v))=-f((v,u))$ for
$(u,v)\in \Eor$ is called a (single-commodity) flow. If $f(u,v)>0$
for some edge $(u,v)\in E$, this indicates flow from $u$ to $v$. The reverse
flow of $f$ is simply $-f$. For the sake of readability we
omit double parentheses and % nicht nur für Flüsse
write, e.g., $f(u,v)$ instead of $f((u,v))$. 

A flow $f$ may have multiple sources and sinks. The balance of a node $v\in V$
is denoted by $\operatorname{bal}_f(v):=\sum_{u\in\Gamma(v)}f(u,v)$, so a positive
balance indicates that the node is receiving more flow than sending out. A flow
is \emph{acyclic}, if there is no path $(p_0,...,p_k)$ in $G$ with $p_0=p_k$
and $f(p_i,p_{i+1})>0$ for all $i$.
Its congestion is the maximum ratio between the flow over an edge and its
weight, denoted by $\cong(f):=\max_{\{u,v\}\in E} |f(u,v)|/\ca(u,v)$.
Given a multi-set of flows $F:=\{f_1, f_2, ..., f_k\}$, its \emph{total congestion} is
$\cong(F):=\max_{\{u,v\}\in E}\sum_{k} |f_k(u,v)|/w(u,v)$.

If a flow $f$ has all flow originating at a single node $s$, i.e.,
$\bal_f(s)\le 0$ and $\bal_f(u)\ge 0$ for $u\ne s$, we say that $f$ is an
$s$-flow. If additionally $\bal_f(s)=-1$, we call $f$ a \emph{unit} $s$-flow.
The set of all unit $s$-flows is denoted with
$\operatorname{flow}(s)$. If a flow $f$ only sends from $s$ to $t$, i.e.,
$f$ is an $s$-flow and $-f$ is a $t$-flow we call $f$ an $s$-$t$ flow.

We use $\tildeO$ to disregard logarithmic factors, so $g=\tildeO(h)$ iff
$g=\mathcal{O}(h\log^c(nW))$ for some constant~$c$.

\paragraph*{Oblivious Routing Scheme}
Now we define the concept of an oblivious routing scheme. The idea is to fix a
single flow between each pair of nodes $(u,v)$, and then multiply that flow
with the actual demand from $u$ to $v$ to get the route. This flow can be
interpreted probabilistically or fractionally, so if we have $f(e)=\frac{1}{2}$
for some edge $e$ it means that the probability of the packet being routed
across edge $e$ is $\frac{1}{2}$; or that half a packet travels along that
edge. We will use both interpretations interchangeably.

\begin{definition}
An \emph{oblivious routing scheme} $S=(f_{u,v})_{u,v\in V}$ consists of a unit
$u$-$v$-flow for each pair of nodes $u,v\in V$.
Given demands $d:V\times V\rightarrow\mathbb{R}_{\ge 0}$ the congestion of $S$
w.r.t.\ $d$, denoted $\cong(S,d)$, 
is the total congestion of the set of flows $\{d(u,v)f_{u,v}:u,v\in V\}$.
The \emph{competitive ratio} of $S$ is
$\max_{d}{\cong(S,d)}/{\cong_{\mathrm{opt}}(d)}$, where
$\cong_{\mathrm{opt}}(d)$ denotes the optimal congestion that can be obtained
for demands $d$ by any routing scheme.
\end{definition}

Defining a compact oblivious routing scheme formally is a bit more
involved, as we have to clarify where information is stored and how it is used.
Before we do so, we introduce the notion of a \emph{routing algorithm}, which defines formally how packets are sent through the network. The intuition is that each packet carries a \emph{packet header}, storing per-packet information. Each node stores a \emph{routing table}, containing arbitrary information for the routing algorithm to use.

The routing algorithms forwards a packet in a local manner, meaning that it reads both the packet header and the routing table before choosing an outgoing edge on which to sent the packet. At the same time, it may modify the packet header. This procedure repeats, until routing algorithm indicates that the packet has reached its destination, by outputting no outgoing edge.

It remains to describe how the packet header is initialized. For oblivious routing schemes, we simply use the name of the target node as initial packet header. However, we will later define more general building blocks, namely transformation schemes. Hence a routing algorithm works with a set of abstract \emph{input labels} as possible initial packet headers (and thus as input to the routing algorithm). For the purposes of a routing algorithm these are simply some set, but later definitions will describe their structure more concretely.

\begin{definition}
A \emph{routing algorithm} $A=(\mathcal{A},\mathcal{L},\mathcal{T})$ is a tuple, $\mathcal{L}\subset\{0,1\}^*$ denoting a finite set of \emph{input labels} and $\mathcal{T}:V\rightarrow\{0,1\}^*$ a \emph{routing table} for each node. Additionally, $\mathcal{A}:\mathcal{T}(V)\times\{0,1\}^*\rightarrow (E\cup\{\emptyset\})\times\{0,1\}^*$ is a (possibly randomized) algorithm, taking both a routing table and a packet header as input, which calculates both the outgoing edge (if any) and the new packet header.
\end{definition}

We remark that the outgoing edge given by $\mathcal{A}$ has to be encoded in some manner, and it must be adjacent to the node the routing table belongs to. The routing table $\mathcal{T}(v)$ for a node $v$ can contain information about $v$, such as a list of adjacent nodes, so any straightforward encoding, e.g., the index in this list, will work.

Given a routing algorithm $A=(\mathcal{A},\mathcal{L},\mathcal{T})$, a start vertex $v\in V$, and an input label $\ell\in\mathcal{L}$, the above mechanism defines a process for probabilistically distributing packets from $v$ to targets in the network. We use $A(v,\ell)$ to denote a flow that describes the associated routing paths. This is defined as follows: We inject a packet at $v$, with $\ell$ as packet header and execute $\mathcal{A}$ until no outgoing edge is returned. Then $A(v,\ell)(e)$ is the probability that the packet is routed over $e$ (note that $\mathcal{A}$ may be randomized).

A routing algorithm $A=(\mathcal{A},\mathcal{L},\mathcal{T})$ is \emph{compact}, if packet headers and input labels have size $\tildeO(1)$, and the routing table of a node $v\in V$ has size $\tildeO(\deg(v))$.
\medskip

Recall that an oblivious routing scheme corresponds to a routing algorithm where the input labels are names of nodes in the graph. Consequently, we say that such a scheme is compact if its routing algorithm is compact.

Formally, we assign a name to each vertex in the graph, which we call \emph{node label}, i.e., we have some function $\operatorname{node}:V\rightarrow\{0,1\}^*$. For an oblivious routing scheme $S=(f_{u,v})_{u,v\in V}$ we use the set of all node labels $\operatorname{node}(V)$ as input labels, so the initial packet header is the node label of the target node. 

We say that $S$ is \emph{compact} if there exists a compact routing algorithm $A=(\mathcal{A},\operatorname{node}(V),\mathcal{T})$ with $f_{u,v}=A(u,\operatorname{node}(v))$. This definition matches the one used by Räcke and Schmid~\cite{raecke2018compact}, although it is more explicit.

The main result of this paper is the existence of a compact oblivious routing scheme, with competitive ratio $\tildeO(1)$.

\def\one{\text{\mathversion{bold}$1$\mathversion{normal}}}

\paragraph*{Transformation Schemes}
Our routing scheme will be composed out of several building blocks, which we
call \emph{transformation schemes}. Loosely speaking, they correspond to single-commodity flows which we are able to route.

We consider \emph{distributions} or \emph{weight functions} of the form
$\mu:V\rightarrow\mathbb{R}_{\ge0}$ that assign a non-negative weight
to vertices in $V$. If we only specify a weight function on a subset
$S\subseteq V$ we assume that it is $0$ on $V\setminus S$. We use
$\mu(S):=\sum_{v\in S}\mu(v)$ to denote the weight of a subset $S$, and
$\one_v:V\rightarrow\mathbb{N}$ to denote the special 
weight function that has weight $1$ on $v$ and $0$ elsewhere.
For a distribution $\mu$ we use $\bar{\mu}:=\frac{1}{\mu(V)}\mu$ to denote the
corresponding \emph{normalized distribution}.

\def\TS{\mathit{TS}}
\begin{definition}
A \emph{(compact%
\footnote{As all of our transformation schemes are compact (the later variants of deterministic and concurrent transformation schemes will be as well), we may drop the ‘compact’ when appropriate.}%
) transformation scheme (TS)} is a compact routing algorithm with a single input label.
\end{definition}

\def\muin{\mu_{\operatorname{in}}}
\def\muout{\mu_{\operatorname{out}}}
\def\bmuin{\bar{\mu}_{\operatorname{in}}}
\def\bmuout{\bar{\mu}_{\operatorname{out}}}

The above definition is not very useful by itself. The underlying idea is that we view a
transformation scheme $\TS$ as on operation to transform one distribution of packets into another,
by executing the routing algorithm. More precisely, given $P$ packets each
packet follows the flow $\TS(v)$ at its source location $v\in V$.
This will send it to some target node (probabilistically, according to $\TS(v)$).

We say that a transformation scheme routes from some \emph{input distribution}
$\muin$ to an \emph{output distribution} $\muout$, if the above process
transforms a set of $P$ packets that are distributed according to $\bmuin$
(i.e., a node $v$ has $\muin(v)/\muin(V)\cdot P$ packets in expectation) into
a set of packets that are distributed according to $\bmuout$, i.e., afterwards
a node has $\muout/\muout(V)\cdot P$ packets in expectation.

In addition we associate a \emph{demand} $d(\TS)$ and \emph{congestion}
$\cong(\TS)$ with a transformation scheme $\TS$. We say a tranformation scheme
routes demand $d(\TS)$ from $\muin$ to $\muout$ with congestion $\cong(\TS)$ if the
expected load on an edge $e$ for the above process is at most
$\cong(\TS)\cdot \ca(e)$ when $P=d(\TS)$ (we allow $P$ to be
non-integral).

Note that, of course, the input for a transformation scheme could be any packet
distribution. However, the congestion of the scheme
is stated w.r.t.\ some fixed input distribution $\muin$ (its \emph{natural
input distribution}) and some total demand $d(\TS)$.

From the congestion-value for $\muin$ and its demand $d$, one can
then deduce the congestion-value for other inputs. If we, e.g., use the
transformation scheme on a demand $d'$ that is distributed according to $\nu$
we experience congestion at most $\max_{v\in V}d'\bar{\nu}(v)/d\bmuin(v)$.

To make our notation more concise, we write a statement like
\enquote{\emph{$\TS$ routes $\muin$ to $\muout$ with demand $d$ and congestion
    at most $C$}} as \enquote{\emph{$\TS$ routes
    $\RouteArg[1pt]{d}{\muin}{\muout}$} with congestion (at most) $C$}. We omit
the demand $d$ if it equals $1$.

It may happen that for some transformation scheme $\TS$
we cannot exactly specify the output distribution 
that corresponds to its natural input distribution $\muin$.
We say $\TS$ routes $\RouteArg[1pt]{}{\muin}{\muout}$ with
\emph{approximation} $\sigma$ if the real output distribution $\muout'$
fulfills
$\bmuout(v)/\sigma\le\bmuout'(v)\le\bmuout\cdot\sigma$.

Finally, in some proofs we will view packets as discrete entities and specify that the transformation scheme does not split them up. 

However, this collides with the fractional nature of the transformation scheme, which is caused by randomization. Therefore we introduce the following definition of a deterministic transformation scheme, that extracts this randomness and makes it explicit.

\begin{definition}
A \emph{(compact) deterministic transformation scheme} $\TS=(\mathcal{A},\mathcal{P}(V),\mathcal{T})$ is a compact routing algorithm where $\mathcal{A}$ is deterministic and $\mathcal{P}(v)=\{1,...,N_v\}$ is a set of \emph{path-ids} valid for a node $v\in V$.
\end{definition}

The idea of the above definition is that we can specify a “random seed” as input label, which will determine precisely how a packet is routed. The ordinary transformation scheme will correspond to choosing an input label u.a.r.\ from sets $\mathcal{P}(v)$. In this sense the above definition makes the random choices of a transformation scheme explicit.

Note that, technically, the definition of routing algorithms allows any path id in $\mathcal{P}(V)$ to be specified at a node $v$, not only the ones in $\mathcal{P}(v)$. We ensure that this does not occur.

As $\mathcal{A}$ is deterministic, the path id indeed determines the exact route a packet will take when starting at a certain node. More precisely, each flow $\TS(v,\ell)$ for $v\in V,\ell\in\{1,...,N_v\}$ is simply a path starting at $v$. Still, there is no guarantee that different path-ids send the packet to the same node.

We associate a transformation scheme with each deterministic TS, by choosing the path-id uniformly at random. In this fashion we extend the notions, such as congestion, input/output distributions, etc., that were defined above for ordinary transformation schemes to also cover deterministic transformation schemes.

\paragraph*{Concurrent Transformation Schemes}

While a transformation scheme can mix packets arbitrarily, often we want to distribute several commodities at the same time, with separate input and output distributions for each commodity. This allows us to analyze the congestion more precisely and aggregate the routing information for different commodities. Hence we define the notion of a \emph{concurrent transformation scheme}, which executes multiple transformation schemes in parallel.

The idea is that we take a transformation scheme and additionally specify a commodity as input. 

\begin{definition}
Let $I$ denote a set of commodities.
A \emph{(compact) concurrent [deterministic] transformation scheme (CTS)} is a compact routing algorithm
$\TS=(\mathcal{A},I\times\mathcal{L},\mathcal{T})$, s.t.\ $\TS_i:=(\mathcal{A},\{i\}\times\mathcal{L},\mathcal{T})$ is a [deterministic] transformation scheme for each commodity $i\in I$.
\end{definition}

Note that transformation schemes have a single input label, in which case the $\mathcal{L}$ in the above definition is superfluous and the input to the concurrent transformation scheme is just the commodity. If it is deterministic, we need the path id as input, and $\mathcal{L}$ would be the set of possible path ids. Similar to before, any combination of commodity and path id may be specified at a node, according to the definition of a routing algorithm, but for our purpose only some of these make sense.

The congestion of such a concurrent transformation scheme is defined as
follows. Let $\mu_i$ and $d_i$ denote the input distribution and demand of
$\TS_i$, respectively. Let $X_i(e)$ denote the expected load on an edge $e$ if we
execute $\TS_i$ on $d_i$ packets distributed according to $\mu_i$. The
congestion of the CTS $D$ is defined as
$\cong(D):=\max_{e}\frac{1}{w(e)}\sum_iX_i(e)$.

As an input to the routing algorithm, a commodity $i\in I$ has to be encoded in some fashion. Often, the commodity is determined by the source node (i.e., for each node $v$ at most one input distributions $\mu_i$ is nonzero) and does not need to be specified. Otherwise, we will explicitly describe the necessary encoding as a property of the CTS.

Analogous to transformation schemes, we write \enquote{\emph{$\TS$ routes $\RouteArg[1pt]{d_i}{\mu_i}{\nu_i}$} with congestion (at most) $C$} for each commodity $i\in I$ for a CTS, and extend this notation to deterministic CTS by considering the associated transformation schemes.

\section{Overview}
\label{sec:overview}
In this section we give a high-level overview of the most important steps in
our construction. The first part gives a rough outline of the general approach
of routing along a decomposition tree that forms the basis for some oblivious
routing schemes (e.g.~\cite{racke2002minimizing, bienkowski2003practical,
  harrelson2003polynomial}), and has also been used by Räcke and
Schmid~\cite{raecke2018compact} to obtain compact routing schemes.

\subsection{The Decomposition Tree}
\label{sec:decomp}
The result by Räcke and Schmid~\cite{raecke2018compact} as well as our extension of it use a
decomposition tree, in particular the one described in
\cite{racke2002minimizing}. We refer the reader to these for a more
detailed description and just briefly mention the key ideas here.
We start with a single cluster containing all nodes, and then further refine
that until all clusters consist of just a single node. Hence we get a tree $T$
where nodes are subsets of $V$, which we call \emph{clusters}. The tree $T$ has
root $V$, i.e., the cluster containing all nodes, and leaves $\{v\}$ for each
$v\in V$. For a cluster $S$ with children $S_1,...,S_r$ we have
$S=S_1\dotcup...\dotcup S_r$, so the children are a partition of the parent.
We use $\height(T)$ to denote the maximum distance from any leaf to
the root, and $\deg(T)$ to denote the largest number of children of any cluster.

Now we introduce a number of distributions, which will be
important for routing within the decomposition tree. For any cluster $S$ we
define the \emph{border-weight} $\out_S:S\rightarrow\mathbb{N}$
by $\out_S(v):=\sum_{u\notin S}\ca(v,u)$ for $v\in S$, counting the total
weight of edges leaving the cluster adjacent to a node. Additionally, for any
cluster $S$ with child clusters $S_1,...,S_r$ we define the
\emph{cluster-weight} $w_S:S\rightarrow\mathbb{N}$ as
$w_S:=\sum_i \out_{S_i}$, which also counts edges between
children of $S$. These distributions are shown schematically in Figure
\ref{fig:distributions}.

\begin{figure}
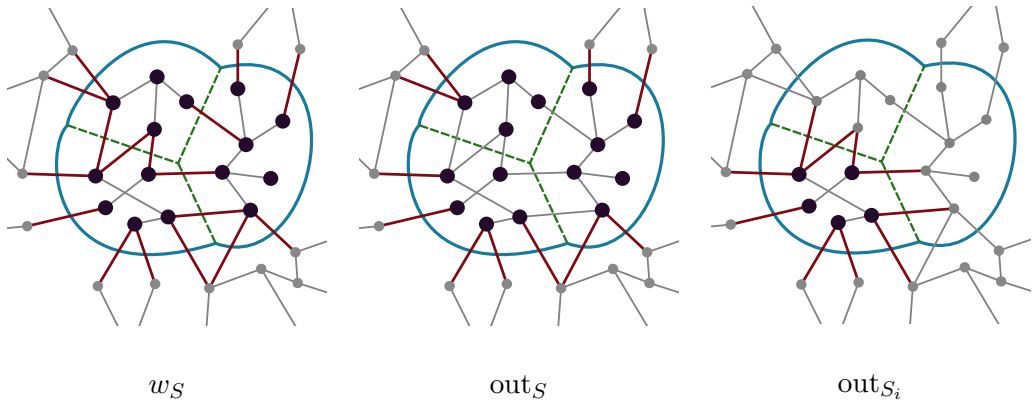

\newcommand{\figdistributionswidth}{4.2cm}
\begin{center}
\begin{minipage}{\figdistributionswidth}
\begin{center}
\def\svgwidth{\figdistributionswidth}\import{figures/}{weight_cluster.pdf_tex}
\end{center}
\centering $w_S$
\end{minipage}\hspace{3mm}
\begin{minipage}{\figdistributionswidth}
\begin{center}
\def\svgwidth{\figdistributionswidth}\import{figures/}{weight_border.pdf_tex}
\end{center}
\centering $\out_S$
\end{minipage}\hspace{3mm}
\begin{minipage}{\figdistributionswidth}
\begin{center}
\def\svgwidth{\figdistributionswidth}\import{figures/}{weight_child.pdf_tex}
\end{center}
\centering $\out_{S_i}$
\end{minipage}
\end{center}
\caption{Distributions inside a cluster $S$ with child $S_i$. The child
  clusters are separated by dashed lines. The respective distribution is
  nonzero only on the highlighted nodes and counts the total weight of
  highlighted edges adjacent to a node.}
\label{fig:distributions}
\end{figure}

The decomposition from \cite{racke2002minimizing} has two essential
properties:
\begin{itemize}
\item For each cluster $S$ we can solve the multi-commodity flow problem with
demands $d(u,v):=w_S(u)w_S(v)/w_S(S)$ for $u,v\in S$ with congestion
$\Co\in\mathcal{O}(\log^2 n)$ \emph{within} $S$, and
\item the tree has logarithmic height, i.e., $\height(T)\in\mathcal{O}(\log n)$.
\end{itemize}

The essential idea for oblivious routing is that in order to route between two
nodes $s$ and $t$ in the graph we determine the path
$\{s\}=S_1,S_2,\dots,S_k=\{t\}$ in the tree and then we route a packet
successively along this path by routing it from distribution $\bar{w}_{S_i}$ to
distribution $\bar{w}_{S_{i+1}}$ for $i=1,\dots,k-1$. Note that distribution
$\bar{w}_{S_1}=\bar{w}_{S_{\{s\}}}=\one_s$ and distribution 
$\bar{w}_{S_k}=\bar{w}_{S_{\{t\}}}=\one_t$, i.e., we indeed route from $s$ to
$t$.

Now suppose that the optimal congestion for the given demand $d$ is
$C_{\mathrm{opt}}(d)$. How much demand does the above process induce for
routing from $w_{S_i}$ to $w_S$ for a child-cluster $S_i$ of some cluster $S$
(for all packets)? %$\RouteArg[1pt]{}{w_{S_i}}{w_S}$.
Each packet that uses the tree edge $(S_i,S)$ in its path has to leave the
cluster $S_i$ and thus create a load of 1 in $\out_{S_i}$. Conversely, OPT has
congestion $C_{\mathrm{opt}}(d)$, i.e., a load of at most
$\out_{S_i}(S_i)\cdot C_{\mathrm{opt}}(d)$ on $\out_{S_i}$. Therefore the total demand
that has to be routed for $\RouteArg[1pt]{}{w_{S_i}}{w_S}$ is at most
$\out_{S_i}(S_i)\cdot C_{\mathrm{opt}}(d)=w_S(S_i)\cdot C_{\mathrm{opt}}(d)$.
An analogous argument holds for sending from $w_S$ to $w_{S_i}$.

Now, we define a CTS for every cluster $S$ that concurrently routes
$\RouteArg[2pt]{w_S(S_i)}{w_{S_i}}{w_S}$ for all child-clusters with small
congestion. If these schemes have congestion at most $C$ then the overall
competitive ratio of the compact oblivious routing scheme is $\height(T)C$ as
an edge is contained in at most $\height(T)$ many clusters.
Hence, we can restate our goal as follows. For every cluster $S$ find
\begin{quote}
\bold{Mixing CTS}\hfill\\
A CTS that routes
$\RouteArg[2pt]{w_{S}(S_i)}{w_{S_i}}{w_S}$ for each child $S_i$, with congestion $\tildeO(1)$.

\medskip
\bold{Unmixing CTS}\hfill\\
A CTS that routes
$\RouteArg[2pt]{w_{S}(S_i)}{w_S}{w_{S_i}}$ for each child $S_i$, with congestion $\tildeO(1)$.
\end{quote}
Here, we have to think about the encoding of the commodities, i.e., the indices of child clusters $S_i$. 
For our oblivious routing scheme we relabel the vertices so that the new name
of a vertex $v$ encodes the path from the root to the leaf $\{v\}$ in the
decomposition tree. Then when we are given a packet with a source and a target
node we can determine the path in the tree. For routing along an edge
$(S_i,S_{i+1})$ of this path we extract the name of the child cluster and use this as commodity for the CTS. Furthermore, we will fix a specific name for each child cluster, incorporating a little bit of information for the CTS. (As in the scheme by Räcke and Schmid~\cite{raecke2018compact}, the name will be the index in the list of child clusters, sorted by size.)

\subsection{Constructing Transformation Schemes}
\label{sec:theproblem} In this
section we give an overview of the steps for
constructing transformation schemes that for some cluster $S$ route
$\RouteArg[2pt]{w_{S}(S_i)}{w_{S_i}}{w_S}$ and
$\RouteArg[2pt]{w_{S}(S_i)}{w_S}{w_{S_i}}$ with small congestion. For this we
use simplified versions of the main lemmata that are proven in the technical
analysis in Section~\ref{sec:analysis}. We mark these simplified version
with a \enquote{$\,'\,$}, so Lemma
~3$'$ would be the simplified version
of Lemma~3 in Section~\ref{sec:analysis}.

\paragraph*{Single-commodity flows}
The first lemma that we show is how to construct a transformation scheme
from a given flow, to route between \emph{integral} distributions.

\begin{manuallemma}{\ref{lem:sctopath}\strich}[Single-commodity flow
routing] \label{lem:sctopath2} Let $f$ denote a flow with congestion at most
$\mathcal{O}(\operatorname{poly}(nW))$, and $\mu, \mu'$ integral
distributions with $\mu'-\mu=\operatorname{bal}_f$.\footnote{We remark that due
  to flow conservation, $\mu(V)=\mu'(V)$ holds.} Then there exists a
compact, deterministic transformation scheme that routes
$\RouteArg[3pt]{\mu(V)}{\mu}{\mu'}$ with congestion $\mathcal{O}(\cong(f))$
and has $N_v:=\mu(v)$ valid path-ids at node $v$.
\end{manuallemma}

This means that if we are given a flow then we can construct a transformation
scheme that allows us to send packets from sources (outgoing net-flow) to
targets (incoming net-flow). Note that there is no guarantee which target
a packet will be sent to if the flow contains several targets.

\paragraph*{Product multicommodity flow}
The second step of our approach is to obtain a concurrent transformation scheme
that routes a \emph{product multicommodity flow} (PMCF). Suppose that we are given a
weight function $\we:V\rightarrow\mathbb{N}$ on the vertices of the graph. We
associate a multicommodity flow problem with this weight function by defining a
demand $d(u,v)=\we(u)\we(v)/\we(V)$ for any pair $(u,v)$ of vertices. One can view
this demand as each vertex $u$ generating a flow of $\we(u)$ and distributing it
according to $\bwe$. Suppose that we can solve this multicommodity flow
problem with some congestion $C=\mathcal{O}(\mathrm{poly}(nW))$. We show that
we can then obtain a CTS that routes a solution to the PMCF.

\begin{manuallemma}{\ref{lem:krvtopath}\strich}[PMCF-routing] \label{lem:krvtopath2}
Given a graph $G$ together with a weight function $w:V\rightarrow\mathbb{N}$
on the vertices there exists a compact, deterministic CTS that routes
$\RouteArg[3pt]{\we(u)}{\textbf{1}_u}{\we}$ for each $u\in V$ with approximation
$1+\mathcal{O}(n^{-1})$ and congestion $\tildeO(C)$.
\end{manuallemma}
We obtain this result by making use of the KRV-framework~\cite{khandekar2009graph}. One way to
view this framework is that it tries to embed an expander into a graph by
solving a small number of single-commodity maximum flow problems. Each maximum
flow solution gives rise to a matching. One can then route to a random vertex by
following the \enquote{\emph{matching random walk}}, i.e., in the $i$-th step
the packet takes the (embedded) matching edge with probability $1/2$.

We proceed slightly differently. Instead of decomposing the flow into matchings
and then route along the matchings (which seems difficult to do with small
routing tables) we simply use Lemma~\ref{lem:sctopath2} to route along the flow.
This means in the $i$-th step we stay with probability 1/2 or we route the
packet along the flow to some target of the flow. More concretely, assume that the
KRV-scheme uses a flow $f$ between sources $S:=\{v\in V\mid \bal_f(v)<0\}$
and targets $T:=\{v\in V\mid \bal_f(v)>0\}$
in the $i$-th step. Then we construct two transformation schemes.
Let
\begin{equation*}
\mu(v):=\left\{\begin{array}{ll}
                 -\bal_f(v)&v\in S\\
                 0&\text{otw.}\\
              \end{array}\right.
              \quad\text{ and }\quad
\mu'(v):=\left\{\begin{array}{ll}
                 \bal_f(v)&v\in T\\
                 0&\text{otw.}\\
              \end{array}\right.              
\end{equation*}
We use Lemma~\ref{lem:sctopath2} to construct a transformation scheme
that routes $\mu\rightarrow \mu'$ and one that routes $\mu'\rightarrow\mu$.
These then allow us to distribute packets in the described way. The guarantees
of KRV still hold for this slightly modified scheme, which means that after
performing a polylogarithmic number of such steps a packet is at a
random location.

The above process and the transformation schemes for the individual iterations
can be combined into a single concurrent transformation scheme. The id-sets $M_v$
for this scheme contain bitstrings that encode for every iteration: (a) the
id to be used in the transformation scheme for this iteration and (b) a bit
that indicates whether to route along the flow or to stay.
Note that the CTS is deterministic, i.e., after choosing the id the
packet follows a fixed path in the network.

\begin{figure}[t]
\begin{center}
\def\svgwidth{12.7cm}\import{figures/}{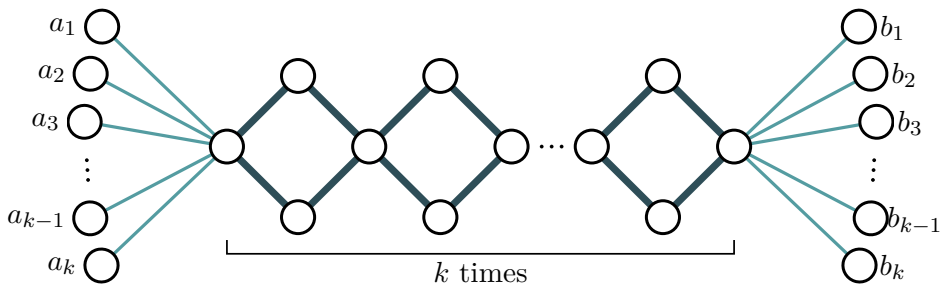}
\end{center}
\caption{Routing multiple paths over a single edge. Each node $a_i$ sends a packet to $b_i$. A single path needs $k$ bits of information to encode, there are $k$ paths and $n=\mathcal{O}(k)$ nodes, so on average each node needs to store $k^2 / n=\Omega(n)$ bits if an arbitrary set of paths is chosen, which is too much. The same holds for paths chosen uniformly at random.}
\label{fig:smalloverbig}
\end{figure}

The result by R\"acke and Schmid~\cite{raecke2018compact} also required a sub-routine for
routing a product multicommodity flow. They used a randomized rounding approach
on the multicommodity flow solution of the PMCF-instance, and crucially
exploited the fact that the number of routing paths going through an edge in
such a solution is $\tildeO(C)$. As illustrated in Figure~\ref{fig:smalloverbig},
we cannot do this in our scenario as
the number of paths going through an edge might be as large as $\tildeO(CW)$.
Storing these paths would require large routing tables.

\paragraph*{Routing arbitrary demands}
The PMCF-scheme described above routes $\RouteArg[2pt]{w(u)}{\one_u}{\we}$. If we
choose $\we:=w_S$ for some cluster $S$ then this scheme gives us the first part
of our goal: we can concurrently route
$\RouteArg[2pt]{w_{S}(S_i)}{w_{S_i}}{w_S}$ with small congestion. To see this,
observe that if we consider the demands in the PMCF-scheme for all commodities
$u\in S_i$ combined, this is $\sum_uw_S(u)\one_u = w_S\upharpoonright S_i=\operatorname{out}_{S_i}$.
We can go from $w_{S_i}$ to $\operatorname{out}_{S_i}$ \emph{within $S_i$} using Lemma~\ref{lem:sctopath2}, and then the PMCF-scheme distributes it according to $w_S$. 

Hence, by simply merging commodities
$u\in S_i$ into one we obtain our desired CTS.

However, for our oblivious routing scheme we also need to be able to route
commodities $\RouteArg[3pt]{w_{S}(S_i)}{w_S}{w_{S_i}}$. This turns out to
be much more involved. Note that we cannot simply \enquote{route in reverse}
because a transformation scheme is inherently directed.

We do not directly construct a transformation scheme that routes
$\RouteArg[2pt]{w_{S}(S_i)}{w_S}{w_{S_i}}$ but we first embed an auxiliary
graph into the cluster $S$ (via a transformation scheme). This auxiliary graph
is directed and must have small degree for the embedding to be compact.

\begin{manuallemma}{\ref{lem:krmtoperm}\strich}[general graph embedding]\label{lem:krmtoperm2}
Let $G'=(S, A, d)$ denote a weighted, directed graph, where
the total weight of incoming and outgoing edges of a node $v$ is at most
$w_S(v)$, and $\deg_{G'}(v)\in\tildeO(\deg(v))$. Then
there is a compact CTS that routes $\RouteArg[2.5pt]{d(u,v)}{\one_u}{\one_v}$
for each $(u,v)\in A$ with congestion $\tildeO(C)$.
\end{manuallemma}

This lemma is the main technical contribution of our work. A rough outline of
the approach is as follows. The first observation is that one could combine the
result for the PMCF-routing with Valiant's trick~\cite{VB81,Va82} of routing to random
intermediate destinations. Suppose that we want to route from $x$ to $y$. Then
we first apply the PMCF-scheme of Lemma~\ref{lem:krvtopath2}; this brings us to
a node $z$ chosen according to $w_S$. At $z$ we choose a path id
$\mathit{id}_y$ that brings us to $y$, i.e., if we apply the
transformation scheme starting from node $z$ with id $\mathit{id}_y$ the packet
is delivered to $y$. We choose $\mathit{id}_y$ uniformly at random from all
path ids that will deliver the packet to $y$. This applies Valiant's trick
and the standard analysis shows that the congestion of this approach will be
$\tildeO(C)$.

However, implementing this approach with small routing tables is problematic.
The node $z$ could store a table of path ids which can be used for routing to
$y$ but this is clearly not compact.

If all edges have weight $1$, we can apply a suitable randomized rounding to
the above path generation method. Then the number of paths that go from $x$ to $y$ are just
$\tildeO(\deg(x))$. This allows the node $x$ to store the necessary information
for every path. In the weighted setting, however, the randomized rounding
approach leads to $\tildeO(W\cdot\deg(x))$ many paths. The resulting tables
would not be compact.

Instead we proceed as follows. We say a path $p$ is a \emph{class $\ell$ path}
if the smallest capacity edge of $p$ is from class $\ell$. (Recall that we assume all capacities to be powers of two, and that a class $\ell$ edge is one with capacity $2^\ell$.)

A pair $(x,y)$ is
from class $\ell$ if $\ell$ is the most frequent class that occurs when
generating $x$-$y$-paths by the above process.

In a first step we change the path generation process to only use class $\ell$
paths for a class $\ell$ pair. This only increases the congestion by a factor
of $\Nclass$ (the number of classes). For a randomized rounding approach to
guarantee a good congestion we need to spread the traffic between a class
$\ell$ pair $(x,y)$ over roughly $k:=d(x,y)/2^\ell$ many class $\ell$ paths.

For this we split the packet into $k$ different parts, each with a different path. However, we cannot store information for each such path in $x$ directly, as $k$ may be large. Instead, we identify $k$ many other nodes, each of which we use to store the information for just a single path. Of course, we cannot simply pick any nodes in the cluster ― they have to be reachable from $x$ with low congestion.

As it turns out, the set of $k$ class $l$ paths from $x$ to $y$ already contains an appropriate choice. Each class $l$ path contains a class $l$ edge, and we pick one of its two adjacent nodes as helper node. Now observe that each path transports $2^\ell$ flow, so there can only be a small number of paths using that edge, because we have low congestion. That means that we can use $\tildeO(1)$ space in the helper node, for each path that uses it.

Now that we have found $k$ helper nodes that can store our routing information, the packets still have to reach these nodes, to pick up that information. So now we send a single-commodity flow from \emph{all} source nodes $x'$ of class $\ell$ pairs to their helper nodes, and then back. We set up a TS for both directions of the flow, using Lemma \ref{lem:sctopath}. 

Note that this does not guarantee that a packet from source node $x$ reaches \enquote{its} helper node, but this is not required ― it only needs to reach \emph{a} node in which to store its routing information. Similarly, the packet may not get back to $x$, but end up at a different $x'$.

We have the same packet distribution as before, meaning every $x'$ has the same number of packets, but possibly different ones. However, each packet has picked up $\tildeO(1)$ of information while passing its helper node.

Suppose that all packets now in $x$ have a target $y'$ s.t.\ $(x,y')$ is a class $\ell'\ge\ell$ pair. Then we can pick a single path for each of these packets, and store the information for that path in the helper node. (Recall that paths are generated by the PMCF-scheme of Lemma \ref{lem:krvtopath2}, so we only need to store their path ids.)

If that is not the case we split the packet further, by applying the scheme recursively.

\paragraph*{Hypercube embedding} The previous lemma tells us that we can embed
graphs with low degree. More precisely, we can embed any graph $G'=(S,A,d)$ which fulfills the following properties.
\begin{enumerate}[(1)]
\item The degree $\deg_{G'}(v)$ at a vertex is polylogarithmic.
\item The capacity of incoming edges and the capacity of outgoing edges at a
vertex is roughly equal to $w_S(v)$.
\end{enumerate}
Now, in order to be able to construct an unmixing CTS, i.e., route
$\RouteArg[2pt]{w_{S}(S_i)}{w_S}{w_{S_i}}$ for a cluster $S$, we find some graph with the above properties where an unmixing CTS is easy to implement, and then embed that graph into $G$. In particular, we want a $G'$ which has one additional property.
\begin{enumerate}[(1)]\setcounter{enumi}{2}
\item There is a suitable numbering of the child clusters of $S$ for which
there exists a CTS for $G'$ that routes
$\RouteArg[3pt]{w_{S}(S_i)}{w_S}{w_{S_i}}$ for each $S_i$ with small congestion and commodity $S_i$ encoded as integer $i$.
\end{enumerate}

The result by Räcke and Schmidt~\cite{raecke2018compact} used a hypercube, where each node $v$ received $w_S(v)$ hypercube ids. For weighted edges this would violate property (1).

Instead, we essentially embed several (disconnected) hypercubes, one for each class $\ell$. A node $v$ then receives roughly $w^{(\ell)}_S(v)$ hypercube ids, at most one for each class $\ell$ edge adjacent to it.
The existence of a good CTS scheme for $G'$ then
follows from classical results about online routing on the hypercube~\cite{VB81}.

\section{Detailed Analysis}
\label{sec:analysis}

% Lemmata
%  1. Single-commodity flow routing
%  2. PMCF to flow
%  3. PMCF routing
%  4. Chebychev-type bound
%  5. Randomised rounding
%  6. General graph embedding
%  7. Hypercube embedding
%  8. Main result, w_S to w_{S_i} and vice versa
%  9. Rest of the proof for compact ORS
%
% Possible Splits
%  - (12 3) (456 7) (8 9), mixing, unmixing, proof
%    + (12 3 8a) (456 7 8b) (9), 8 split into mixing and unmixing
%  - (12 3 456) (7 8 9), new results vs old results
%  - 8 ( ... ) 9, Proof first
%
% (12 3 8a) (456 7 8b) (9), 8 split into mixing and unmixing
%
% - Mixing
% - Routing $\Route{w_{S_i}}{w_S}$
% - Routing the PMCF
%
% - Unmixing
% - Routing $\Route{w_S}{w_{S_i}}$

In this section we provide the details for constructing a mixing and an
unmixing CTS for every cluster. This will then give the oblivious routing scheme.

Most of our lemmata related to the PMCF work for arbitrary weights $c$ where the corresponding PMCF can be solved with congestion $\Co\in\mathcal{O}(\operatorname{poly}(nW))$. We prove those without making further assumptions.

Note also that we can restrict the transformation schemes to route within a cluster: If a lemma is proven for arbitrary weights $c$, then it also works for weights $w_S$ within the subgraph $G[S]$.

\subsection{Constructing a mixing CTS}
\label{sec:mixing}

\paragraph*{Single-commodity flows}
To get started, we show how to use a single-commodity flow to construct a transformation scheme. The key idea is that we decompose the flow into paths with unique identifiers and store intervals for each edge, to encode the outgoing paths.

\begin{lemma}[Flow routing] \label{lem:sctopath}
Let $f$ denote a flow with $\cong(f)\in\mathcal{O}(\operatorname{poly}(nW))$, and $\mu, \mu'$ integral distributions with $\mu'-\mu=\operatorname{bal}_f$. Then there exists a deterministic TS that routes $\RouteArg[3pt]{\mu(V)}{\mu}{\mu'}$ with congestion $\mathcal{O}(\cong(f))$.

The routing table of node $v$ has size $\mathcal{O}(\deg(v)\log (nW))$, and packet headers have length $\mathcal{O}(\log (nW))$. At a node $v$ there are $N_v:=\mu(v)$ valid path ids.
\end{lemma}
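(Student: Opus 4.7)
The plan is to decompose $f$ into simple integer-valued paths and encode the routing at each node as a short list of slot-interval rules.

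I first normalize $f$: removing directed cycles preserves $\operatorname{bal}_f$ and does not increase $\cong(f)$, so I may assume $f$ is acyclic. Since $\mu,\mu'$ are integral, $\operatorname{bal}_f=\mu'-\mu$ is integral, and by standard flow rounding (iteratively cancel fractional cycles in the difference between $f$ and a candidate integral flow) I obtain an integral $\tilde f$ with the same balance and $|\tilde f(e)|\le\lceil|f(e)|\rceil$. Since $w(e)\ge 1$ this changes congestion by at most $1$, so $\cong(\tilde f)=\mathcal{O}(\cong(f)+1)$. The acyclic integer flow $\tilde f$ decomposes into exactly $\mu(V)$ unit paths from the support of $\mu$ to the support of $\mu'$; I assign the $\mu(v)$ paths originating at $v$ the ids $1,\dots,\mu(v)$, so that $N_v=\mu(v)$.

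A packet injected at $v$ with id $k$ then follows the $k$-th path. In the randomized transformation scheme associated with the deterministic TS, each packet picks its id uniformly in $\{1,\dots,\mu(v)\}$, so in expectation every path carries exactly one packet. The expected load on edge $e$ is $|\tilde f(e)|\le(\cong(f)+1)\,w(e)$, giving congestion $\mathcal{O}(\cong(f))$ (assuming $\cong(f)\ge 1$; otherwise the statement is vacuous), and the expected packet count at a sink $u$ is $\mu'(u)$, matching the required output distribution.

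For compactness the packet header stores the current edge together with the packet's slot on that edge, a number in $\{1,\dots,|\tilde f(e)|\}\subseteq\{1,\dots,\operatorname{poly}(nW)\}$, totalling $\mathcal{O}(\log(nW))$ bits. At each node $v$ I pick a basic feasible solution of the bipartite transit flow at $v$ (left side: in-edges together with a ``source $v$''; right side: out-edges together with a ``sink $v$''); such a solution has only $\mathcal{O}(\deg(v))$ nonzero (in-side, out-side) pairs. The routing table at $v$ then lists these pairs, each entry recording the in-edge, in-slot interval, out-edge, and starting out-slot, for a total of $\mathcal{O}(\deg(v)\log(nW))$ bits. A packet arriving at $v$ on edge $e_{\text{in}}$ with slot $s$ is routed by locating the unique entry whose in-edge is $e_{\text{in}}$ and whose in-slot interval contains $s$, then applying the linear slot shift onto the out-edge.

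The main obstacle is the global consistency of the slot orderings: on each directed edge $e=(v,u)$ the order is jointly constrained by $v$ (which wants paths on $e$ grouped by in-edge at $v$) and by $u$ (which wants them grouped by out-edge at $u$). My plan is to coordinate the bipartite basic flows at the two endpoints by choosing a further basic matching on $e$ that refines the per-endpoint groupings; since the number of routing-table entries counted at $v$ is bounded by the $v$-side support of this matching, telescoping the bound over edges at $v$ keeps the per-node total at $\mathcal{O}(\deg(v))$. Once the orderings are fixed, correctness and the congestion bound follow directly from the path decomposition.
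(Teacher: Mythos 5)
Your plan follows the same first steps as the paper's proof (make the flow integral and acyclic, decompose into $\mu(V)$ unit paths, store interval-style rules of size $\mathcal{O}(\deg(v)\log(nW))$), but it diverges in how the paths are \emph{addressed}, and this is exactly where a real gap appears. The paper assigns each unit path a \emph{global} token id, sorts the tokens present at a node $u$ by that id, gives each out-edge a consecutive block, and stores only one interval per out-edge; since the interval is a set of global ids that never change as a packet travels, no cross-node coordination is required and correctness is immediate. You instead use \emph{per-edge slots}, so the ordering of slots on each edge $(v,u)$ must simultaneously make the in-edge grouping at $v$ and the out-edge grouping at $u$ consecutive. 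You notice this obstacle, but the proposed fix --- take ``a further basic matching on $e$ that refines the per-endpoint groupings'' --- does not close it. A basic feasible solution of the transportation problem between the two groupings is a spanning tree of the bipartite cell graph, and a spanning tree does \emph{not} in general admit an edge ordering in which the edges incident to every vertex are consecutive: a ``double star'' tree (a center of degree $3$ whose three neighbours each have degree $3$) already fails, by a simple positional argument. What you actually need is a structured plan such as the northwest-corner (staircase) plan, for which both row- and column-parts do become consecutive intervals; you must also argue that the chosen within-cell bijections compose consistently along a path. None of this is stated, and your ``telescoping'' claim that the $v$-side support of the per-edge plan bounds $v$'s routing-table size conflates the per-edge coordination plan with $v$'s own transit-flow cells, whose count $\mathcal{O}(\deg(v))$ is what you actually store and does not depend on the per-edge plan.

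In short: your path decomposition, congestion bound, and header/table size accounting are fine, and your flow-rounding step (cycle cancellation) is a legitimate alternative to the paper's max-flow-in-an-augmented-graph rounding. But the slot-consistency subproblem is the crux of the lemma, and your sketch leaves it unresolved; the clean way out is to drop per-edge slots entirely and carry a global token id, as the paper does, which makes all per-node interval assignments automatically consistent.
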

\begin{proof}
First, we transform $f$ to be integral and acyclic. 

Let $F:=\lceil\operatorname{cong}(f)\rceil$. We consider the single-commodity
flow problem where we add a source $s$, a sink $t$ and edges $(s,v)$, $(v,t)$
for $v\in V$ with respective capacities $\mu(v)$ and $\mu'(v)$. We retain the
other edges, scaling their capacities by $F$. All capacities
are integral and $f$ is a solution with flow value $\mu(V)$, so there is also an
integral, acyclic solution $f'$, which by construction has
$\mu'-\mu=\operatorname{bal}_{f'}$ and $\operatorname{cong}(f')\le F$.

For each source $v\in V$ (i.e., a node that has $k:=-\!\bal_{f'}(v)>0$) we put
$k$ tokens into $v$. These have labels $a+1,a+2,...,a+k$ where $a$ is chosen
s.t.\ the labels are disjoint for all nodes. We store $a$ and $k$ in $v$, which
takes $\mathcal{O}(\log (nW))$ space.

Now we repeatedly move those tokens according to $f'$, iteratively constructing the
TS in the process. Each token represents a unit of flow.

As $f'$ is acyclic, we iterate over the nodes based on the topological ordering
given by $f'$. Let $u$ denote the current node. We assume the invariant that no
previous node contains any tokens, which is true in the beginning and will hold
inductively for all other iterations.

Therefore, $u$ has tokens precisely equal to the flow over incoming edges of
$u$ (plus $-\!\bal_{f'}(u)$ if $u$ is a source), which, due to flow conservation,
is the same as the flow over outgoing edges (plus $\bal_{f'}(u)$ if $u$ is a sink). We
distribute the tokens by sorting them and assign each outgoing edge $(u,v)$ an
amount of consecutive tokens, according to its flow $f'(u,v)$. These tokens are
sent over that edge. Exactly $\max\{\bal_{f'}(v), 0\}$ tokens
remain, which we remove from the graph. Hence we deleted all tokens from $u$
and added tokens only to its successors, so our invariant still holds.

After the last iteration, all nodes are empty, so each token has been routed.
To construct a TS, we encode the path of all tokens, by storing an
\emph{interval} $I(u,v)$ of ids for each edge $(u,v)$, which contains the
tokens which were routed over that edge. Note that the interval may be larger
than the number of tokens that use $e$; all tokens that traverse $u$ and are
inside $I(u,v)$ use edge $e=(u,v)$, but the interval may contain tokens that do
not traverse $u$. In total we just need to store two ids of length
$\mathcal{O}(\log (nW))$ per incident edge.
%We route all other path ids by doing nothing.
(Recall that the
maximum load over any edge is at most
$FW\in\mathcal{O}(\operatorname{poly}(nW))$.)

As we send exactly $|f'(e)|$ tokens over an edge $e$ we get the same congestion
as $f'$. 

We want to construct a deterministic TS, so the input label at a node $v$ contains a path id $\ell\in\{1,...,\mu(v)\}$. As $v$ stores the offset $a$ from above, we can map the path id to $a+\ell$, one of the tokens starting at $v$. Afterwards, the routing algorithm simply needs to check which interval contains the token, to simulate their movement. This is deterministic.
\end{proof}

We use Lemma \ref{lem:sctopath} typically to route between distributions which are ‘close’ to our weights $\we$. As we can solve the PMCF with low congestion, this will have low congestion as well. The following lemma encapsulates that argument.

\begin{lemma}\label{lem:routingscf}
Let $\muin,\muout$ denote distributions with $\muin,\muout\le\we$ and $\muin(V)=\muout(V)$. Then there is a flow $f$ with $\operatorname{bal}_f=\muout-\muin$ and $\cong(f)\le 2C$.
\end{lemma}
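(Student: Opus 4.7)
The plan is to use the PMCF solution as a “routing backbone” and scale it appropriately in two phases: first route $\muin$ into the $w$\nobreakdash-proportional distribution, then route that distribution out to $\muout$. The hypothesis $\muin,\muout\le w$ is exactly what allows each scaling factor to stay below $1$, so we pay at most $C$ for each phase, giving $2C$ in total.

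More precisely, by assumption we are given, for every pair $u,v\in V$, a $u$\nobreakdash-$v$\nobreakdash-flow $g_{u,v}$ of value $w(u)w(v)/w(V)$ whose total congestion (summed over all commodities) is at most $C$. Let $T:=\muin(V)=\muout(V)$. I would define
\[
f_1 := \sum_{u,v\in V} \frac{\muin(u)}{w(u)}\, g_{u,v},\qquad
f_2 := \sum_{u,v\in V} \frac{\muout(v)}{w(v)}\, g_{u,v},
\]
and set $f:=f_1+f_2$. A short bookkeeping computation shows that $f_1$ routes $\muin$ to the rescaled weight distribution $\tfrac{T}{w(V)}w$: at any node $x$ the total outflow contributed is $\sum_v \tfrac{\muin(x)}{w(x)}\tfrac{w(x)w(v)}{w(V)}=\muin(x)$, while the inflow is $\sum_u \tfrac{\muin(u)}{w(u)}\tfrac{w(u)w(x)}{w(V)}=\tfrac{T}{w(V)}w(x)$. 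The analogous computation for $f_2$, using the symmetric scaling by $\muout(v)/w(v)$, gives balance $\muout(x)-\tfrac{T}{w(V)}w(x)$. Adding the two telescopes, so $\operatorname{bal}_f=\muout-\muin$ as required.

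For the congestion bound I would argue as follows. Because $\muin(u)/w(u)\le 1$ for every $u$, scaling each commodity $g_{u,v}$ of the PMCF by this factor only shrinks it, so the total congestion of the scaled multicommodity flow is still at most $C$. Since $|f_1(e)|\le\sum_{u,v}|g_{u,v}(e)|\cdot \muin(u)/w(u)$ by the triangle inequality, we get $|f_1(e)|\le C\cdot w(e)$, i.e.\ $\cong(f_1)\le C$. The same argument with $\muout(v)/w(v)\le 1$ yields $\cong(f_2)\le C$, and therefore $\cong(f)\le \cong(f_1)+\cong(f_2)\le 2C$.

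There is no real obstacle here; the only care needed is to (i) observe that one must actually split into two phases rather than try to scale directly by $\muin(u)\muout(v)/(T\cdot d(u,v))$, which can exceed $1$, and (ii) keep in mind that the bound on congestion uses the total (summed) congestion of the PMCF, not a per-commodity bound, so that the triangle inequality passes through when forming the single-commodity combination $f_1$ and $f_2$.
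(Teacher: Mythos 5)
Your proof is correct and takes essentially the same approach as the paper: both split the flow into two phases via Valiant's trick through the $w$-proportional intermediate distribution, and both observe that the per-commodity scaling factors $\muin(u)/w(u)$ and $\muout(v)/w(v)$ are at most $1$ so each phase costs at most $C$. Your $f_1$ and $f_2$ are algebraically identical to the two sums in the paper's definition of $f$, just written with demand-scaled commodity flows $g_{u,v}$ instead of unit flows $f_{u,v}$.
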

\begin{proof}
We construct $f$ based on the PMCF, using Valiant's trick.

Let $(f_{u,v})_{u,v\in V}$ denote a solution to the PMCF with congestion at most $C$, where $f_{u,v}$ is a unit $u$-$v$-flow, i.e., it sends one unit of flow from $u$ to $v$ and has to be scaled by the corresponding demand.

Intuitively, we route from $u$ to $v$ by sending a packet to an intermediate node $w$, picked randomly weighted by $\we$. The route itself uses the flows $f_{u,w}$ and $f_{w,v}$, which we can concatenate by adding them. Hence, 
\[f:=\sum_{u,v\in V} f_{u,v}\cdot\muin(u)\bwe(v) + \sum_{u,v\in V} f_{u,v}\cdot\bwe(u)\muout(v)\]
As both $\muin(u)\bwe(v)$ and $\bwe(u)\muout(v)$ are at most $\we(u)\we(v)/\we(V)$, the demand of the PMCF, we have $\cong(f)\le 2C$. The flow to and from the intermediate nodes cancels, so a node $v$ sends $\muin(v)$ packets and receives $\muout(v)$, yielding $\operatorname{bal}_f=\muout-\muin$.
\end{proof}

\paragraph*{Routing the PMCF}
Our first goal is to create a transformation scheme for the PMCF, for which we use the technique of cut-matching games introduced in \cite{khandekar2009graph}. We will not discuss it in detail, but instead encapsulate the properties of interest and refer to the original proofs. We need modify the technique slightly
\footnote{In particular, we use a bijection instead of a perfect matching, allow the matching player to choose subsets which will be shuffled randomly, and require a stronger bound on the error.}
, so for those parts we briefly show how the proofs can be adapted.

Consider the following game. We are given a finite set of nodes $V'$, with $|V'|$ even. There are $N\in\mathcal{O}(\log^2 |V'|)$ rounds and two players. In round $k$,
\begin{itemize}
\item Player 1 (the “cut player”) chooses a partition $A_1\dotcup A_2=V'$ with $|A_1|=|A_2|$,
\item Player 2 (the “matching player”) chooses a bijection $M:V'\rightarrow V'$ \emph{respecting the partition}, i.e., it maps $A_1$ to $A_2$ and vice versa, and
\item Player 2 chooses a partition $B$ of $V'$.
\end{itemize}

At the end of the game, we define a random walk on $V'$ consisting of $N$ steps. In step $k$, a packet
\begin{itemize}
\item moves from node $v$ to either $v$ or $M(v)$ with probability $\frac{1}{2}$, and then
\item moves from the resulting node $v'$ to a node in $B_{v'}$ uniformly at random, where $v'\in B_{v'}\in B$ is the group of $v'$ in the partition $B$.
\end{itemize}
The game is won by Player 1 if this random walk is \emph{mixing}, i.e., for any $u,v\in V'$ it moves from $u$ to $v$ with probability between $1/|V'|\pm\varepsilon$, where $\varepsilon=1/|V'|^2$.

\begin{lemma}[KRV]\label{lem:krv}
Player 1 has a winning strategy.
\end{lemma}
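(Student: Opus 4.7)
The plan is to adapt the potential-function argument of Khandekar--Rao--Vazirani (KRV). Let $\rho_k$ denote the $|V'|\times|V'|$ doubly stochastic matrix whose $(u,v)$-entry is the probability that the walk is at $v$ after $k$ rounds when started at $u$, and define
\[\Phi_k := \Norm{\rho_k - J}_F^2,\]
where $J$ has all entries $1/|V'|$. The plan is to exhibit a strategy for Player~1 that forces $\Phi_N \le |V'|^{-5}$ after $N=\Theta(\log^2|V'|)$ rounds. By Cauchy--Schwarz this implies $|\rho_N(u,v)-1/|V'||\le\sqrt{\Phi_N}\le|V'|^{-5/2}\le\varepsilon$ entrywise, which is exactly the winning condition.

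First I would observe that Player~2's $B$-partition step corresponds to multiplication by a doubly stochastic averaging operator, which by Jensen's inequality can only decrease $\Phi_k$. Consequently the adversarial choice is $B=\{\{v\}:v\in V'\}$, rendering that step trivial, and it may be ignored when designing Player~1's strategy. The matching step then applies $\tfrac{1}{2}(I+P_M)$ where $P_M$ is the permutation matrix of the bijection $M$. Player~1's strategy is the standard KRV cut-player: compute a score function on $V'$ from the residual $\rho_k-J$ (for instance the coordinates of its projection onto a top singular direction), and set $A_1,A_2$ to be the top and bottom halves of $V'$ under this score. The KRV potential-drop analysis then yields that for \emph{every} bijection $M$ respecting the partition, $\Phi_{k+1}\le(1-c/\log|V'|)\Phi_k$ for a universal constant $c>0$; the only properties of $M$ used are that $P_M$ is a permutation swapping $A_1$ and $A_2$, both of which hold here even though $M$ need not be an involution.

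The main obstacle is the stronger accuracy requirement $\varepsilon=1/|V'|^2$, which is polynomially tighter than what KRV usually state. This is handled by tuning the hidden constant in $N=\Theta(\log^2|V'|)$: since each round contracts $\Phi$ by a factor $(1-\Omega(1/\log|V'|))$ and $\Phi_0\le|V'|$ trivially, after $\Theta(\log^2|V'|)$ rounds the potential is polynomially smaller than any prescribed inverse power of $|V'|$, in particular below $|V'|^{-5}$. The only remaining subtlety is verifying that the KRV potential-drop step carries over in the presence of the adversarial $B$-step and the bijection-versus-matching change; both are addressed above, so the proof reduces to a direct application of the existing KRV analysis.
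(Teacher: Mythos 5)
Your proposal is correct and follows essentially the same route as the paper: cite the KRV potential-function analysis and check that the three modifications to the game (bijection in place of a perfect matching, the extra $B$-averaging step, the tighter $\varepsilon$) do not break it. Both you and the paper argue that the $B$-step is potential-nonincreasing (you via Jensen, the paper via Cauchy--Schwarz on the averaged probability vectors), and both handle $\varepsilon = 1/|V'|^2$ by increasing the constant in $N=\Theta(\log^2|V'|)$. The one place where the two arguments diverge slightly is the bijection-vs-matching step: the paper decomposes $M$ into two bipartite matchings $m_1 = M\upharpoonright A_1$ and $m_2 = M\upharpoonright A_2$, and applies the KRV potential-drop lemma separately to each half; you instead assert that the potential-drop formula and its lower bound carry over unchanged because KRV only uses that $M$ pairs each vertex with one on the opposite side of the cut. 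Your claim is in fact correct (the potential drop after the step is $\tfrac14\sum_v\lVert Q_v-Q_{M(v)}\rVert^2$ regardless of whether $M$ is an involution, and the lower bound on this quantity in KRV's Lemma~3.3 only uses that $M$ respects the partition), but as written it is an unverified assertion; the paper's decomposition makes the same point more explicitly and would be the cleaner thing to write down.
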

\begin{proof}
See \cite[Section 3.1]{khandekar2009graph}. The proofs have to be changed slightly to work here, which we will now do, using the original notation.

While the original result uses perfect matchings instead of bijections for $M$, it extends directly by slightly modifying the proof of Lemmata 3.1 and 3.3. In particular, we can decompose $M$ into two perfect matchings $m_1=M\upharpoonright A_1$ and $m_2=M\upharpoonright A_2$. We write
\[\psi_M(t):=\sum_{(i,j)\in M}\Norm{\frac{P_i(t)+P_j(t)}{2}-\frac{\one}{n}}^2\]
for the resulting potential after adding $M$ to the random walk. Note that a perfect matching, such as $m_1$, changes the potential to $2\psi_{m_1}(t)$, as we need to count each node twice. We now have $\psi_M(t)=(2\psi_{m_1}(t)+2\psi_{m_2}(t))/2$, and the original result guarantees the reduction for both halves.

Moving from $v'$ to a node in $B_{v'}$ corresponds to averaging the probability vectors of the nodes in that cell, which does not increase the potential function: For any set of vectors $v_1,...,v_k$ the Cauchy-Schwarz inequality yields
\[k\,\Big\lVert\frac{1}{k}\sum_{i}v_i\Big\rVert^2
\le \frac{1}{k}\Big(\sum_{i}1\cdot \norm{v_i}\Big)^2
\le \frac{1}{k}\Big(\sum_{i}1^2\cdot\sum_{i}\norm{v_i}^2\Big)=\sum_{i}\norm{v_i}^2\]

The original result just guarantees $\varepsilon\le 1/2|V'|$, but we can increase the number of iterations by a constant factor.
\end{proof}

These random walks can be made deterministic, by storing whether to switch sides at each step, provided that we can send packets along our chosen bijection $M$. Additionally, while we need a lot of nodes in $V'$ to match the weights~$\we$, a single node $v$ can simulate many in $V'$ with only little bookkeeping. Here we use that \enquote{mixing} nodes of $V'$ at each step is not problematic, a notion made precise by choosing the appropriate partition $B$. The bijections $M$ will be stored implicitly using single-commodity flows.

In total we get short descriptions of the possible paths taken by the random walk, enabling us to circumvent one of the key problems arising in weighted graphs ― the inability to store paths directly within the graph due to too many paths going over a single edge.

\begin{lemma}[PMCF transformation scheme] \label{lem:krvtopath}
There exists a deterministic CTS that routes $\RouteArg[3pt]{\we(v)}{\one_v}{\we}$ for each $v\in V$ with approximation $1+\mathcal{O}(n^{-1})$ and congestion $\mathcal{O}(\Co\log^2n)$.

The routing table of node $v$ has size $\mathcal{O}(\deg(v)\log^3(nW))$, while path ids and packet headers have length $\mathcal{O}(\log^3(nW))$.
\end{lemma}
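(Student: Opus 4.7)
The plan is to translate the cut-matching game of Lemma~\ref{lem:krv} into a compact deterministic CTS on $G$. I take $V'$ to be a set of sub-nodes, where each $v\in V$ owns $\we(v)$ copies, so $|V'|=\we(V)$ and the uniform distribution on $V'$ pulls back to $\bar\we$ on $V$. The cut player uses the strategy guaranteed by Lemma~\ref{lem:krv}. For the matching player in round $k$: given a partition $A_1\dotcup A_2$, let $a_1(v),a_2(v)$ count the sub-nodes of $v$ on each side; since $a_1,a_2\le\we$ with $a_1(V)=a_2(V)$, Lemma~\ref{lem:routingscf} yields a flow $f_k$ with $\bal_{f_k}=a_2-a_1$ and $\cong(f_k)\le 2\Co$. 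Applying Lemma~\ref{lem:sctopath} to $f_k$ and $-f_k$ then produces deterministic TSs $\TS_k^{+},\TS_k^{-}$ that realize $\RouteArg[2pt]{}{a_1}{a_2}$ and $\RouteArg[2pt]{}{a_2}{a_1}$ respectively, each with congestion $\mathcal{O}(\Co)$; the bijection $M_k$ required by the game is exactly the pairing of source and sink tokens fixed by the interval assignments in the proof of Lemma~\ref{lem:sctopath}. The partition $B_k$ is chosen so that each cell consists of all sub-nodes owned by a single $v\in V$, making the shuffle step local and free.

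To assemble the CTS I concatenate the $N=\mathcal{O}(\log^2(nW))$ rounds. For commodity $v$, a path id at $v$ has the form $(\ell,(b_k,p_k,s_k)_{k=1}^{N})$: $\ell\in\{1,\dots,\we(v)\}$ picks the starting sub-node of $v$, $b_k\in\{0,1\}$ is the round-$k$ stay/move bit, $p_k$ is the Lemma~\ref{lem:sctopath} path id fed to whichever of $\TS_k^{\pm}$ is active when moving, and $s_k$ specifies the representative sub-node chosen by the local shuffle. Since $p_k\le\we(V)\le nW$ and $s_k\le W$, a single path id fits in $\mathcal{O}(N\log(nW))=\mathcal{O}(\log^3(nW))$ bits, which doubles as the header length. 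The routing table at $v$ stores, per round, the $\mathcal{O}(\deg(v)\log(nW))$ tables of $\TS_k^{+}$ and $\TS_k^{-}$ together with the counts $a_1(v),a_2(v)$ needed to decide which side of the cut the current sub-node lies on, for a total of $\mathcal{O}(\deg(v)\log^3(nW))$.

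For correctness, sampling the $b_k,p_k,s_k$ uniformly at random reproduces exactly the KRV random walk started at a uniform sub-node of $v$, so Lemma~\ref{lem:krv} gives that the endpoint distribution on $V'$ is within $1/|V'|^2$ of uniform in $\ell_\infty$; pushing through the ownership map, this yields the claimed $(1+\mathcal{O}(n^{-1}))$-approximation of $\bar\we$ on $V$. Congestion is additive across the $N$ rounds, each contributing $\mathcal{O}(\Co)$, for a total of $\mathcal{O}(\Co\log^2 n)$ up to the $\log W$ factors hidden by $\tildeO$. The main obstacle I expect is verifying that the flow-driven bijections $M_k$ together with the `per-node' partition $B_k$ remain admissible matching-player moves in Lemma~\ref{lem:krv} and still drive the potential down at the promised rate; this is precisely the reason the lemma was strengthened (bijections instead of perfect matchings, matching-player-chosen partitions, tighter error bound), and once that is in place the rest of the argument is careful bookkeeping of sizes and per-round congestion.
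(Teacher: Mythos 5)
Your proposal follows the paper's approach quite closely: reduce to the cut-matching game of Lemma~\ref{lem:krv} with virtual sub-nodes, implement each round's bijection via Lemma~\ref{lem:routingscf} + Lemma~\ref{lem:sctopath}, and take the partition $B$ to merge sub-nodes owned by the same physical node. However, there is a genuine gap in your claim that sampling $(b_k,p_k,s_k)$ uniformly ``reproduces \emph{exactly} the KRV random walk.'' The ranges from which $p_k$ and $s_k$ must be drawn depend on the node the packet occupies \emph{in round $k$} (there are $\we(v)$ sub-nodes and $a_1(v)$ or $a_2(v)$ valid $\TS_k^{\pm}$ path ids at node $v$), but that node is not known at the time the path id is chosen, since the path id must be a fixed bit-string sampled at the source. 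You cannot simply sample $s_k\le W$, $p_k\le\we(V)$ and hope the values are valid. The paper addresses this by sampling a single index $i'$ from a large fixed range $\{1,\ldots,2n^2\we(V)N\}$ and reducing it modulo $2n\we(v)$ at the node actually reached in round $k$; this induces a small per-round bias of $1+\mathcal{O}(1/nN)$, which must be tracked and is exactly where part of the $1+\mathcal{O}(n^{-1})$ approximation factor comes from. Without this device, your construction does not yield valid path ids, and the correctness argument for the output distribution does not go through.

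A second, smaller issue: you set $|V'|=\we(V)$, but the game requires $|V'|$ to be even, and to get a multiplicative $1+\mathcal{O}(n^{-1})$ error out of the additive $\varepsilon=1/|V'|^2$ guarantee of Lemma~\ref{lem:krv} you also need $|V'|$ to be at least on the order of $n\we(V)$ (so that $\we(v)\cdot\varepsilon$ is small relative to $\we(v)/|V'|$). The paper therefore uses $|V'|=2n\we(V)$ with each $v$ owning $2n\we(v)$ sub-nodes; the $2$ fixes parity and the $n$ fixes the approximation factor. With $|V'|=\we(V)$ your error bound is only $1\pm1/\we(V)$, which need not be $1+\mathcal{O}(n^{-1})$ when $\we(V)<n$.
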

\begin{proof}
We want to play the game described above Lemma \ref{lem:krv}, so we define a set of “virtual” nodes $V':=\{1,...,2n\we(V)\}$
\footnote{We would like to have $V'=\{1,...,\we(V)\}$, but we need to make sure that $|V'|$ is even and at least $n$.}
and choose an embedding $\varphi: V'\rightarrow V$ which assigns each node virtual nodes according to its weight, i.e.\ $|\varphi^{-1}(v)|=2n\we(v)$.

In each turn we have $A_1\cup A_2=V', |A_1|=|A_2|$ and can choose any bijection $M$ respecting that partition. We want to simulate the random walk, so we also need a way to send packets according to $M$. In other words, we need a CTS that routes $\Route{\one_{\varphi(v')}}{\one_{M(\varphi(v'))}}$ for each $v'\in V'$. It is difficult to construct such a CTS \emph{given} a specific $M$, so instead we build the transformation scheme first, and then define $M$ accordingly.

We construct two deterministic transformation schemes $\mathit{TS}_1$ and $\mathit{TS}_2$, routing from $A_1$ to $A_2$ and vice versa. Let us first consider $\mathit{TS}_1$.

A node $v$ sends out a packet for each virtual node in $A_1$ assigned to it, so $\mu_1(v):=|\varphi^{-1}(v)\cap A_1|$ in total. Analogously, it receives $\mu_2(v):=|\varphi^{-1}(v)\cap A_2|$ packets. Hence we want to route $\RouteArg[3pt]{n\we(V)}{\mu_1}{\mu_2}$.

We have $\mu_1,\mu_2\le\we$ and $\mu_1(V)=\mu_2(V)=n\we(V)$, so Lemma \ref{lem:routingscf} yields a flow $f$ with $\operatorname{bal}_f=\mu_2-\mu_1$ and $\cong(f)\le 2n\we(V)$. (We scale $\mu_1,\mu_2$ by $1/n$ and the resulting flow by $n$.)

Using $f$, we apply Lemma \ref{lem:sctopath} to get a deterministic transformation scheme $\mathit{TS}_1$ that routes $\RouteArg[3pt]{n\we(V)}{\mu_1}{\mu_2}$ with congestion $\mathcal{O}(n\Co)$. 

At each node $v$ there are now $\mu_1(v)$ path ids to route a packet using $\mathit{TS}_1$. The transformation scheme is deterministic, so each path id corresponds to a single target node. Consider giving $\mu_1(v)$ packets to each node $v$, one for each path id, and routing them accordingly. Then, $v$ will receive $\mu_2(v)$ packets. By mapping the outgoing packets to $\varphi^{-1}(v)\cap A_1$ and the incoming packets to $\varphi^{-1}(v)\cap A_2$ in some fashion, we get a bijection from $A_1$ to $A_2$.

We construct $\mathit{TS}_2$ in the same manner, and combine the two mappings to get the desired bijection $M$ from $V'$ to $V'$. 

It remains to be shown that we can indeed route this bijection efficiently, i.e., without encoding any arbitrary mappings between virtual nodes and path ids.

Instead, we will simply choose a \emph{random} path id for either $\mathit{TS}_1$ or $\mathit{TS}_2$, weighted such that each path id has the same probability. This corresponds to moving to a random virtual node assigned to $v$ in each iteration, i.e.\ choosing our partition as $B:=\{\varphi^{-1}(v):v\in V\}$ in each round.

To summarize, we route the random walk as follows. In each iteration $k=1,...,N$ we flip a fair coin to decide whether we move from node $v$ according to $M$ or not. If yes, we pick a number $i$ u.a.r.\ in $\{1, ..., 2n\we(v)\}$. The first $\mu_1(v)$ numbers stand for path ids of $\mathit{TS}_1$, the others for path ids of $\mathit{TS}_2$. Then we route using the given transformation scheme and path id.

As we want our transformation scheme to be deterministic, these random choices will not be made while routing the packet, but encoded in the path id. There is a small technical issue in that the set $\{1,...,2n\we(v)\}$ from which we sample $i$ depends on $v$, but needs to be encoded in a path id chosen u.a.r.\ from some \emph{fixed} range of integers. Instead, we will sample $i'\in\{1,...,2n^2\we(V)N\}$ u.a.r\ and set $i:=i'\pmod{2n\we(v)}$. (Recall that $N$ is the number of rounds.) So $i$ is not quite uniform, but the probabilities differ by at most a factor of $1+1/nN$ in each round, and $1+1/n$ in total.

Thus we can store all random choices for the $\mathcal{O}(\log^2n\we(V))=\mathcal{O}(\log^2(nW))$ iterations using $\mathcal{O}(\log^3(nW))$ bits. These are the path ids of our transformation scheme. As there is no randomness apart from the choices encoded in the path id, the transformation scheme is deterministic. The packet headers need to include the headers from Lemma \ref{lem:sctopath}, as well as our path ids; the length of the latter dominates. 

Recall that we want to have a CDS that routes $\RouteArg[3pt]{\we(v)}{\one_v}{\we}$ for each $v\in V$. Hence, in aggregate the input distribution is $\we$.

Let us now analyze the congestion. $\mathit{TS}_1$ routes $\RouteArg[3pt]{n\we(V)}{\mu_1}{\mu_2}$ and $\mathit{TS}_2$ does $\RouteArg[3pt]{n\we(V)}{\mu_2}{\mu_1}$, both with congestion $\mathcal{O}(n\Co)$. If we add them and scale the demand by $1/n$ we route $\RouteArg[3pt]{\we(V)}{\we}{\we}$ with congestion $\mathcal{O}(\Co)$. So the distribution of packets does not change in an iteration, except for the factor of $1+1/nN$ above, and the total congestion is $\mathcal{O}(\Co\log^2(nW))$.

Due to Lemma \ref{lem:krv}, the random walk moves to any virtual node with probability between $1/2n\we(V)\pm1/|V'|^2$. We have $|V'|^2\ge 2n^2\we(V)$, so scaling by the total amount of flow $\we(V)$ yields a value in $1/2n\pm1/2n^2$. A node $v\in V$ has $2n\we(v)$ virtual nodes, so it receives between $\we(v)(1\pm1/n)$ packets in the random walk, or between $\we(v)(1\pm2/n)$ in the actual transformation scheme. Hence the output distribution is $\we$, with an approximation of $1+\mathcal{O}(n^{-1})$.
\end{proof}

We remark that this CTS has input distribution $\one_v$ for commodity $v\in V$, which means that the source node of a packet already encodes its commodity.

\paragraph*{Mixing CTS}
To close out section we prove that we can implement the mixing step with the tools we have. To start, we need a small helper lemma.

\begin{lemma}[Routing distributions similar to $c$]\label{lem:routingsimilar}
Let $\muin,\muout$ denote integral distributions with $\muin,\muout\le\we$ and set $M:=\min\{\muin(V),\muout(V)\}$. Then there exists a deterministic TS that routes $\RouteArg[3pt]{M}{\muin(V)}{\muout(V)}$ with congestion $\mathcal{O}(\Co)$. The routing table of node $v$ has size $\mathcal{O}(\deg(v)\log (nW))$, while path ids and packet headers have length $\mathcal{O}(\log (nW))$.
\end{lemma}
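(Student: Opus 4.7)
The plan is to reduce this to the tools already developed, namely Lemmata \ref{lem:routingscf} and \ref{lem:sctopath}, after first equalizing the total masses of $\muin$ and $\muout$. Note that the lemma only asks us to move $M = \min\{\muin(V),\muout(V)\}$ units, not the full mass of either distribution, so it suffices to pick sub-distributions of the correct size.

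First, I would produce integral sub-distributions $\muin'\le\muin$ and $\muout'\le\muout$ with $\muin'(V)=\muout'(V)=M$; this can be done by a trivial greedy procedure (iterate over the nodes of the larger distribution and decrement the weight there until the total equals $M$). Both $\muin'$ and $\muout'$ are still bounded above by $\we$ and are integral. Applying Lemma \ref{lem:routingscf} to the pair $(\muin',\muout')$ yields a flow $f$ with $\operatorname{bal}_f=\muout'-\muin'$ and $\cong(f)\le 2\Co$. Feeding this flow, together with $\muin'$ and $\muout'$, into Lemma \ref{lem:sctopath} produces a deterministic transformation scheme routing $\RouteArg[3pt]{M}{\muin'}{\muout'}$ with congestion $\mathcal{O}(\cong(f))=\mathcal{O}(\Co)$, packet headers of length $\mathcal{O}(\log(nW))$, and routing tables of size $\mathcal{O}(\deg(v)\log(nW))$ — matching the claimed bounds exactly.

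The only conceptual subtlety is that the resulting TS has $N_v=\muin'(v)\le\muin(v)$ valid path ids at node $v$, rather than $\muin(v)$ as one might naively expect. This is consistent with the statement: we are asked to transport demand $M$ starting from distribution $\muin$, and we can interpret this as choosing an arbitrary sub-distribution of size $M$ (namely, $\muin'$) as the actual source pattern, which is legitimate because any $M$-sized sub-distribution of $\muin$ suffices for the natural input of the scheme. I do not expect a genuine obstacle here — the lemma is essentially a packaging of Lemmata \ref{lem:routingscf} and \ref{lem:sctopath} specialised to distributions dominated by $\we$, with a simple mass-equalisation preprocessing step.
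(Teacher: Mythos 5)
Your approach diverges from the paper's in a way that creates a genuine gap. The paper does not truncate $\muin$ and $\muout$ to integral sub-distributions; it scales them to $M\bmuin$ and $M\bmuout$ (keeping them \emph{proportional} to $\muin,\muout$), applies Lemma~\ref{lem:routingscf} to obtain a flow $f$, and then multiplies $f$ and both distributions by $\alpha:=\muin(V)\muout(V)$ so that they become integral again (e.g.\ $\alpha M\bmuin = M\muout(V)\muin$), after which Lemma~\ref{lem:sctopath} applies and the demand is scaled back down. The resulting deterministic TS therefore has $N_v$ proportional to $\muin(v)$ at \emph{every} node $v$ with $\muin(v)>0$, which is exactly what the notation $\RouteArg[3pt]{M}{\muin}{\muout}$ requires: the natural input distribution must be $\muin$ itself.

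Your greedy truncation to $\muin'\le\muin$ with $\muin'(V)=M$ does not preserve this property: whenever $\muin(V)>M$ the procedure can drive $\muin'(v)$ to zero at nodes where $\muin(v)>0$. At such a node the TS has $N_v=0$ valid path ids, so a packet arriving there cannot be routed at all. Your remark that ``any $M$-sized sub-distribution of $\muin$ suffices for the natural input'' is the flaw: the natural input distribution is a fixed feature of the TS against which its congestion is stated, and the downstream callers actually feed it packets distributed according to $\muin$ (e.g.\ $w_{S_i}$ in Lemma~\ref{lem:mixing}, or $w_S$ in Lemma~\ref{lem:unmixing} item~(1), where $w_S(S)$ may greatly exceed $M$). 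There is no freedom to reroute only a sub-distribution of the arriving packets. The fix is precisely the paper's scaling trick, which avoids truncation: work with the fractional $M\bmuin$, then restore integrality by an integer scaling factor $\alpha$ rather than by dropping mass.
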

\begin{proof}
From Lemma \ref{lem:routingscf} we get a flow $f$ with congestion $2\Co$, by scaling the distributions to $M\bmuin$ and $M\bmuout$. Scaling both $f$ and the distributions by $\alpha:=\muin(V)\muout(V)$, the latter are now integral again and we can apply Lemma \ref{lem:sctopath}. This routes $\RouteArg[3pt]{\alpha M}{\alpha M\bmuin}{\alpha M\bmuin}$ with congestion $\mathcal{O}(\alpha\Co)$, or equivalently $\RouteArg[3pt]{M}{\muin}{\muin}$ with congestion $\mathcal{O}(\Co)$.
\end{proof}

Now we fix a cluster $S$ with children $S_1,...,S_r$. As we will later change the numbering of children, it is important that the following lemma works for an arbitrary one.

\begin{lemma}[Mixing CTS]\label{lem:mixing}
There exists a CTS that routes $\RouteArg[3pt]{w_S(S_i)}{w_{S_i}}{w_S}$ for each $i=1,...,r$ with congestion $\mathcal{O}(\Co\log^2n)$ and approximation $1+\mathcal{O}(n^{-1})$. The routing table of node $v$ has size $\mathcal{O}(\deg(v)\log^3(nW))$, while path ids and packet headers have length $\mathcal{O}(\log^3(nW))$.
\end{lemma}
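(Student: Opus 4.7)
The plan is to build the mixing CTS by composing two phases. Phase~(A) routes inside each child cluster $S_i$, moving mass from $w_{S_i}$ to the boundary distribution $\operatorname{out}_{S_i}$; phase~(B) then uses the PMCF-scheme inside $S$ to spread those boundary packets to $w_S$. The hinge is the identity $w_S\!\upharpoonright\! S_i = \operatorname{out}_{S_i}$: since $w_S = \sum_j \operatorname{out}_{S_j}$ and a node $v\in S_i$ lies in only one child, only the $j=i$ term contributes at $v$. Thus the output distribution of phase~(A) equals the relevant input distribution of phase~(B) when we aggregate commodities.

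For phase~(A) I would invoke Lemma~\ref{lem:routingsimilar} in the subgraph $G[S_i]$ with the natural weights $w_{S_i}$, taking $\muin := w_{S_i}$ and $\muout := \operatorname{out}_{S_i}$. The requirement $\muin,\muout\le w_{S_i}$ holds because any edge leaving $S_i$ also leaves the grand-child $(S_i)_j$ containing its endpoint, hence $\operatorname{out}_{S_i}(v)\le w_{S_i}(v)$. Moreover $w_{S_i}(V)\ge \operatorname{out}_{S_i}(V)=w_S(S_i)$, since summing $w_{S_i}$ counts every cross-child edge inside $S_i$ twice and every boundary edge of $S_i$ once. Consequently $M = w_S(S_i)$ and we obtain a deterministic TS routing $\RouteArg[3pt]{w_S(S_i)}{w_{S_i}}{\operatorname{out}_{S_i}}$ with congestion $\mathcal{O}(\Co)$, table size $\mathcal{O}(\deg(v)\log(nW))$ and header length $\mathcal{O}(\log(nW))$. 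Collecting these schemes over $i=1,\dots,r$ gives a CTS whose commodities $i$ are supported on the disjoint sets $S_i$, so the overall congestion is still $\mathcal{O}(\Co)$.

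For phase~(B) I would apply Lemma~\ref{lem:krvtopath} to $G[S]$ with weights $w_S$, yielding a deterministic CTS that routes $\RouteArg[3pt]{w_S(u)}{\one_u}{w_S}$ for each $u\in S$ with approximation $1+\mathcal{O}(n^{-1})$ and congestion $\mathcal{O}(\Co\log^2 n)$. Merging all commodities $u\in S_i$ into one commodity indexed by $i$ gives aggregate input $\sum_{u\in S_i}w_S(u)\one_u = \operatorname{out}_{S_i}$ with demand $w_S(S_i)$, and aggregate output $w_S(S_i)\cdot\bar{w_S}$, which as a distribution is $w_S$. The congestion and approximation are unchanged by commodity aggregation, so this CTS routes $\RouteArg[3pt]{w_S(S_i)}{\operatorname{out}_{S_i}}{w_S}$ for every $i$.

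Concatenating the two phases yields the desired CTS: a packet starting at $v\in S_i$ with commodity $i$ first runs phase~(A) to reach some $v'\in S_i$ distributed as $\bar{\operatorname{out}_{S_i}}=\bar{w_S\!\upharpoonright\! S_i}$, then runs phase~(B) with commodity $i$ to land at distribution $\bar{w_S}$. The header concatenates the commodity, the phase-(A) path id, and the phase-(B) path id, the latter sampled from a large fixed integer range and reduced modulo the number of valid path ids at the current node (exactly as inside Lemma~\ref{lem:krvtopath}), so no starting-node-dependent range is exposed. Header and table sizes are dominated by phase~(B), giving the stated $\mathcal{O}(\log^3(nW))$ and $\mathcal{O}(\deg(v)\log^3(nW))$; the congestions add to $\mathcal{O}(\Co\log^2 n)$, and the approximation is inherited from phase~(B). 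The main obstacle is the bookkeeping of composition: verifying that the intermediate distributions match up exactly (up to the acceptable approximation), that the commodity partition is respected when child-cluster schemes are merged, and that the uniform-path-id convention of Lemma~\ref{lem:krvtopath} still makes the composed routing algorithm a bona fide CTS.
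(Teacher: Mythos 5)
Your proposal matches the paper's proof: both compose Lemma~\ref{lem:routingsimilar} inside each $S_i$ (routing $w_{S_i}\to\operatorname{out}_{S_i}$ with weights $w_{S_i}$, using $w_S\!\upharpoonright\! S_i=\operatorname{out}_{S_i}$ to glue the phases) with Lemma~\ref{lem:krvtopath} on $S$ (routing $\operatorname{out}_{S_i}\to w_S$, merging commodities $u\in S_i$). You simply spell out more of the hypothesis-checking for Lemma~\ref{lem:routingsimilar} (that $\muin,\muout\le w_{S_i}$ and $M=w_S(S_i)$) than the paper does, which is fine.
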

\begin{proof}
For each $S_i$ we route $\RouteArg[3pt]{w_S(S_i)}{w_{S_i}}{\operatorname{out}_{S_i}}$ \emph{within $S_i$} using Lemma \ref{lem:routingsimilar} for weights $c:=w_{S_i}$. This has congestion $\mathcal{O}(C)$, as $w_S(S_i)=\operatorname{out}(S_i)$. It uses space only within $S_i$, so $\mathcal{O}(\deg(v)\log (nW))$ per node $v\in S$.

In total, the packets are now in distribution $\sum_i\operatorname{out}_{S_i}=w_S$ and we apply Lemma \ref{lem:krvtopath} with $c:=w_S$. (Here we do not use that Lemma~\ref{lem:krvtopath} gives a deterministic CTS.) As all source nodes route to $w_S$ concurrently, we route $\RouteArg[3pt]{w_S(S_i)}{\operatorname{out}_{S_i}}{w_S}$ for each $i=1,...,r$ (with approximation $1+\mathcal{O}(n^{-1})$). This has congestion $\mathcal{O}(C\log^2n)$.

For the bounds on space per node and length of packet headers, the costs of the latter step dominate.
\end{proof}

As for Lemma~\ref{lem:krvtopath}, the source node of a packet already determines the commodity, so there is no need to specify an encoding for it.

\subsection{Constructing an Unmixing CTS}
\label{sec:unmixing}

\paragraph*{General Graph Embedding}
Up until now, we have not used that we have only edges of distinct classes. The next two lemmas concern randomized rounding, which we use to select a small number of paths from a flow without increasing congestion. This uses a probabilistic argument to prove existence, but the choice of paths is fixed and not subject to randomness.

Consider some flow $f$ sending $k$ packets from a source node $u$ to a node $v$ with congestion 1, where the flow involves only edges of weight 1. It is obvious that taking a single path with weight $k$ from that flow uniformly at random increases the congestion to $k$, while taking $k$ paths with weight one should intuitively work quite well, giving a congestion $1+\mathcal{O}(\log k)$. This intuition is correct, which we now prove formally.

We call a multi-set of $u$-$v$-paths a \emph{path system}. The \emph{class of a path} is the minimum class of its edges, and the \emph{class of a path system $P$} is the minimum class amongst its paths. To send a packet using a path system $P$ we choose a path uniformly at random. Therefore, if we have multiple path systems $P=\{P_1,...,P_k\}$ with demands $d=\{d_1,...,d_k\}$, then their total congestion is $\cong(P, d):=\cong(\{d_ip/|P_i|:i\in\{1,...,k\},p\in P_i\})$.

\begin{lemma}[Randomized rounding]\label{lem:randomizer2}
Let $P=\{P_1,...,P_k\}$ denote a set of path systems with demands $d$. Then there exists a set $P'=\{P_1',...,P_k'\}$, with $P'_i\subseteq P_i$ and $|P'_i|\le\lceil2^{-l}d_i\rceil$ for each $P_i$ of class $\ell$. The congestion is $\cong(P',d)\in\mathcal{O}(\cong(P,d)+\log n)$.
\end{lemma}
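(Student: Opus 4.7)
My plan is to obtain $P'$ by independent random sampling from each $P_i$ and then apply a Chernoff bound to control the congestion on every edge at once. For each path system $P_i$ of class $\ell_i$, I set $n_i := \lceil 2^{-\ell_i} d_i \rceil$; if $|P_i| \le n_i$ I take $P'_i := P_i$ and do nothing, and otherwise I draw $n_i$ paths from $P_i$ independently and uniformly at random (with replacement, treating $P'_i$ as a multi-set, which is consistent with the definition of path systems in the paper). The demand carried by any single path of $P'_i$ is then $d_i/|P'_i| \le d_i/(2^{-\ell_i} d_i) = 2^{\ell_i}$. The key observation is that every path in $P_i$ has class at least $\ell_i$ by definition of the class of a system, so every edge used by such a path has capacity at least $2^{\ell_i}$; hence each sampled path contributes at most $1$ to the \emph{normalised} load (load divided by capacity) of every edge it traverses.

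Next, I fix an edge $e$ of capacity $2^r$ and study the normalised load $Y_e := \frac{1}{2^r}\sum_i \frac{d_i}{n_i} X_i$, where $X_i$ counts the paths of $P'_i$ using $e$. Only systems with $\ell_i \le r$ can contribute. Expanding each $X_i$ into a sum of independent Bernoullis (one per sampled draw from $P_i$) and weighting by $\frac{d_i/n_i}{2^r} \le 1$, one sees that $Y_e$ is a sum of independent $[0,1]$-valued random variables whose expectation is exactly the original normalised load on $e$, and therefore at most $\mu := \cong(P,d)$.

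Now I apply a Chernoff--Hoeffding tail bound in two regimes. When $\mu \ge \log n$, the multiplicative form gives $\Pr[Y_e \ge 2\mu] \le e^{-\mu/3} \le n^{-1/3}$, and when $\mu < \log n$, the large-deviation form gives $\Pr[Y_e \ge c\log n] \le n^{-c/3}$ for any chosen constant $c$. In both cases the threshold is $\mathcal{O}(\cong(P,d) + \log n)$; choosing $c$ large enough and taking a union bound over the at most $n^2$ edges yields a positive-probability event on which $\cong(P',d) \in \mathcal{O}(\cong(P,d) + \log n)$, and hence the desired $P'$ exists.

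The main obstacle I anticipate is gluing the two Chernoff regimes together cleanly: a purely multiplicative bound is too weak when $\cong(P,d) \ll \log n$, and a purely additive bound is wasteful when $\cong(P,d) \gg \log n$. The calibration $n_i := \lceil 2^{-\ell_i} d_i \rceil$, which matches the per-path demand to the bottleneck capacity $2^{\ell_i}$, is what keeps every normalised contribution in $[0,1]$ regardless of the edge capacity $2^r$; without it the random variables $\frac{d_i/n_i}{2^r} X_i$ would not be bounded by $1$ uniformly across $i$ and $r$, and the Chernoff-plus-union-bound argument would break.
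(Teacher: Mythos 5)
Your approach is essentially the same as the paper's: sample $\lceil 2^{-\ell_i}d_i\rceil$ paths with replacement from each $P_i$, observe that the class calibration keeps every per-draw contribution to the normalised load of any edge in $[0,1]$, and finish with a concentration bound plus a union bound over edges; the paper just packages the concentration step into a single tail bound for negatively correlated $[0,1]$-variables applied with $\delta := 6\ln m + \mathbb{E}(X_e)$, which handles both of your regimes at once. One small calibration slip: in your $\mu \ge \log n$ regime, $\Pr[Y_e \ge 2\mu] \le e^{-\mu/3} \le n^{-1/3}$ is not small enough to survive a union bound over $\Theta(n^2)$ edges --- you need a larger multiplicative constant (e.g.\ $\Pr[Y_e \ge 7\mu] \le e^{-2\mu} \le n^{-2}$), after which the argument goes through.
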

\begin{proof}
We use the probabilistic method, so we will choose $P_i'$ by picking the appropriate number of paths from $P_i$ randomly, and then show that the congestion is low enough with positive probability. The latter part uses the following bound:

\begin{lemma}[adapted from \mbox{\cite[lemma 10]{raecke2018compact}}]\label{lem:randomizer}
Let $X_1,...,X_n$ denote a set of negatively correlated random variables taking values in $[0,1]$. Let $X$ denote their sum, and let $\delta\ge\mathbb{E}(X)$. Then $\operatorname{Pr}(X\ge\mathbb{E}(X)+\delta)\le e^{-\delta/3}$.
\end{lemma}

First we consider the congestion on a particular edge $e$. For path system $P_i$ of class $\ell$ we sample $N_i:=\lceil2^{-l}d_i\rceil$ paths (with replacement), so we define random variables $X_{i,p,j}$ for each $p\in P_i, j\in\{1,...,N_i\}$ as the congestion induced on $e$. More precisely, if $e\in p$ and $p$ is picked as the $j$-th path, then $X_{i,p,j}=d_i/N_iw(e)$, else $X_{i,p,j}=0$. Note that an $e\in p\in P_i$ has class at least $\ell$, so $X_{i,p,j}\le 1$.

Finally, $X_e:=\sum X_{i,p,j}$ is the total congestion of edge $e$. Of course, each path still has the same probability, so $\mathbb{E}(X_e)$ equals the original congestion on $e$ w.r.t.\ $P$ and demands $d$. Choosing $\delta:=6\ln m+\mathbb{E}(X_e)$ for Lemma \ref{lem:randomizer}, we get $\operatorname{Pr}(X_e\ge2\mathbb{E}(X_e)+6\ln m)\le 1/m^2$.

There are $m$ edges in total, so by union bound the probability of \emph{any} edge having congestion larger than $\mathcal{O}(\cong(P,d)+\log n)$ is strictly less than 1. 
\end{proof}

Having the tool of randomized rounding at our disposal, we now turn to the most involved lemma in our construction. If we want to route small demands we can already do so using Lemmata \ref{lem:krvtopath} (to get a path system) and \ref{lem:randomizer2} (to pick a small number of paths to store). However, routing a demand of size, say, $W$ from node $u$ to $v$, we would have to pick $2^{-l}W$ paths from a path system $P_{u,v}$ connecting $u$ and $v$, to ensure low congestion. Here, $\ell$ is the class of $P_{u,v}$.

Hence, if $\ell$ is small we would need to route a large number of paths. Instead, we find a cut consisting only of small (i.e., class $\ell$) edges separating each path in $P_{u,v}$.
\footnote{Note that this is not a cut of the \emph{graph}, which might still be connected, but of the \emph{path system}.}
These can be used to store routing information.

So we take all pairs $(u_i,v_i)$ in the same situation, that is, connected by a class $\ell$ path system, and take the union of all the cuts consisting of class $\ell$ edges. Then we route a single-commodity flow from the nodes $u_i$ to this cut. Of course, the packets from $u$ may have ended up at an edge belonging to some other $u_i$, so it may not be possible to route to $v$ directly. Instead we send the packets back through the single-commodity flow.

Again, the packets from $u$ may now reside in a different node $u'$. However, on the way they passed through a class $\ell$ edge, which we can use for storing the path from $u'$ to $v$. (To be precise, we use one of the adjacent nodes for storage.) But now we have a new problem---while both $P_{u,v}$ and $P_{u',v'}$ are class $\ell$ path systems, $P_{u',v}$ need not be. If it has class at least $\ell$, all is well and we can route the packet with a single path. Though if it has a class $\ell'<\ell$ we have to route using multiple paths again.

The fact that the class keeps decreasing allows us to solve this problem recursively. At each class we split the packet into smaller ones and find an edge to store the routing information for each of them. This stops when the packet is small enough to route directly, at the latest once it has reached size 1.

When we refer to storing the routing information for a node $u$ in some previous node $v$ on the path of a packet, we are using shorthand for a slightly elaborate transformation scheme, which we will refer to as \emph{anticipative routing}. When the packet arrives at node $v$, the node checks the packet header and adds the stored routing information to it, before sending the packet on its way normally. Then, once the packet has reached the node $u$ the routing information is extracted from the packet header and used.

\begin{lemma}[General graph embedding]\label{lem:krmtoperm}
Let $G'=(V, A, d)$ denote a weighted, directed graph, where the total weight of incoming and outgoing arcs of a node $v$ is at most $\we(v)$, and $\deg_{G'}(v)\in\mathcal{O}(\deg(v)\log^2n)$. Then there is a CTS that routes $\RouteArg[2.5pt]{d(u,v)}{\one_v}{\one_v}$ for each $(u,v)\in A$ with congestion $\mathcal{O}(\Co\log^2n\log^2W)$.

The routing table of node $v$ has size $\mathcal{O}(\deg(v)\Co\log^2n\log W\log^3(nW))$, while packet headers have length $\mathcal{O}(\log^3(nW))$. Commodity $(u,v)\in A$ is encoded as $\ell\in\{1,...,\deg_{G'}(u)\}$.
\end{lemma}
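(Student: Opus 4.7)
The plan is to implement the construction sketched in Section~\ref{sec:theproblem}. As a starting point, for every arc $(u,v)\in A$ I would form a distribution over $u$-$v$-paths by concatenating Valiant's trick on the PMCF scheme of Lemma~\ref{lem:krvtopath}: route from $u$ to a random intermediate distributed as $\bar{w}$, then from that intermediate to $v$ using the reverse scheme. Combined with the demand $d(u,v)\le w(u)\wedge w(v)$, this yields an ensemble of $u$-$v$-paths whose aggregated congestion is $\tildeO(C)$. Classify each path by the minimum class~$\ell$ of an edge it uses, and assign to each pair $(u,v)$ the class~$\ell(u,v)$ contributing the largest fraction of the demand; restricting each pair to only its own-class paths loses at most a factor $\Nclass=1+\log_2 W$ in congestion.

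Next I would split each class~$\ell$ commodity $(u,v)$ with demand $d(u,v)$ into $k:=\lceil d(u,v)/2^{\ell}\rceil$ subpackets of weight~$2^{\ell}$ and use randomized rounding (Lemma~\ref{lem:randomizer2}) to pick exactly $k$ paths from its class~$\ell$ path system. For each chosen path I pick one endpoint of its bottleneck class~$\ell$ edge as a \emph{helper node} that will store the id of that path. A class~$\ell$ edge has capacity~$2^{\ell}$ and carries flow units of size~$2^{\ell}$, so by the congestion bound only $\tildeO(C)$ chosen paths use it; hence each helper node stores only $\tildeO(C)$ path ids per incident edge, giving the required table size $\tildeO(C\cdot\deg(v))$.

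The actual CTS is then built in rounds indexed by decreasing class~$\ell$. Round~$\ell$ handles all pairs of current class~$\ell$. Aggregating over such pairs, the sources form a distribution $\muin\le w$ and the helpers form a distribution $\muout\le w$ with equal total mass, so Lemma~\ref{lem:routingscf} gives a single-commodity flow of congestion $\tildeO(C)$ realizing $\muin\to\muout$, and Lemma~\ref{lem:sctopath} turns it into a deterministic TS for both directions (forward to helpers, then back). When a packet traverses a helper node I use \emph{anticipative routing}: the helper appends the path id it has stored for that packet slot to the header. After the round trip a packet originally at $u$ with target $v$ resides in some source $u'$, carrying in its header a path id valid for some pair $(u',v')$ generated by the PMCF. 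If the pair $(u',v)$ has class at least~$\ell$ a single path now suffices and the attached id finishes the route; otherwise the packet's new effective class $\ell'<\ell$ is strictly smaller and we recurse in round~$\ell'$.

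The main obstacle is keeping the congestion and space blow-up from the recursion under control. The key structural observation is that the effective class strictly decreases from round to round, so each original packet contributes to at most $\Nclass=\mathcal{O}(\log W)$ rounds; within each round, Lemma~\ref{lem:routingscf} together with Lemma~\ref{lem:sctopath} contributes $\tildeO(C)$ congestion, and randomized rounding adds only an $\mathcal{O}(\log n)$ additive term. Multiplying the $\mathcal{O}(C\log^2 n)$ bound inherited from Lemma~\ref{lem:krvtopath}, the factor $\Nclass$ from restricting to single-class paths, and the $\Nclass$ recursion rounds gives the claimed $\mathcal{O}(C\log^2 n\log^2 W)$. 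Packet headers need to carry at most one PMCF path id ($\mathcal{O}(\log^3(nW))$ bits) plus class and round bookkeeping, and the commodity $(u,v)$ can be encoded at $u$ as the index $\ell\in\{1,\dots,\deg_{G'}(u)\}$ of the arc in the local enumeration of $G'$'s outgoing arcs, which is stored in $u$'s routing table.
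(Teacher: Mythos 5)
Your overall plan matches the paper's closely: form per-arc path systems via Valiant's trick on the PMCF scheme, restrict each pair to its majority class (losing $\Nclass$), use randomized rounding to pick $\lceil d(u,v)/2^\ell\rceil$ class-$\ell$ paths, store path ids at helper nodes sitting on the bottleneck class-$\ell$ edges, reach the helpers via a single-commodity flow forth and back using anticipative routing, and recurse on the strictly smaller effective class. The aggregate accounting ($\Nclass$ rounds, each contributing $\mathcal{O}(C\log^2 n\log W)$) and the table/header sizes are also in line with the paper.

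There is, however, one concrete gap in how you obtain the flow from sources to helpers. You claim that the helper-node distribution $\muout$ satisfies $\muout\le \we$ and then invoke Lemma~\ref{lem:routingscf}. This is not justified and in general false: a helper node $z$ is an endpoint of a class-$\ell$ bottleneck edge, and the randomized-rounding bound only guarantees that $\mathcal{O}(C\log^2 n\log W)$ chosen paths traverse each such edge. Since every such path contributes weight $2^\ell$ to $z$, you get $\muout(z)\in\mathcal{O}\bigl(C\log^2 n\log W\bigr)\cdot\we^{(\ell)}(z)$, which can exceed $\we(z)$ by a $\operatorname{polylog}$ factor, so Lemma~\ref{lem:routingscf}'s precondition fails (and rescaling to force $\muout\le\we$ would cost you an extra $C$-factor in congestion). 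The paper sidesteps this entirely: after randomized rounding it takes $M$ to be the set of path \emph{prefixes} up to and including the first class-$\ell$ edge, treats $M$ itself as the flow $f$ with $\operatorname{bal}_f=\muout-\muin$, and feeds that $f$ into Lemma~\ref{lem:sctopath}. The congestion of $f$ is then controlled directly by the randomized-rounding bound rather than by Lemma~\ref{lem:routingscf}, which gives $\mathcal{O}(C\log^2 n\log W)$ without any extra scaling. You should replace the appeal to Lemma~\ref{lem:routingscf} by this direct use of the chosen path prefixes. A minor related omission: you should first round each large demand down to a multiple of $2^\ell$ and leave the fractional remainder to be routed with the non-large arcs, otherwise the token count at a source is not the integer required by Lemma~\ref{lem:sctopath}.
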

\begin{proof}
As mentioned above, the problematic demands are those which need multiple paths to route with low congestion. We will refer to those demands as \emph{large}. More precisely, we call an arc $(u,v)\in A$ \emph{$l$-large} if $d(u,v)>2^\ell$ and $\ell$ is the class of $P_{u,v}$ (defined below).

The proof will proceed in three parts.
\begin{enumerate}[(a)]
\item First, we construct path systems $P=\{P_{u,v}: u,v\in V\}$, s.t.\ all paths in $P_{u,v}$ have the same class and can be routed by storing a $\mathcal{O}(\log^3(nW))$ path id. For any demands $d'$ where both the total incoming and outgoing demand of a node $v$ are at most $\we(v)$, we have $\cong(P,d')\in\mathcal{O}(\Co\log^2n\log W)$. 
\item Then we show that we can partially route arcs $a\in A$ which are $l$-large, replacing them with $2^{-l}d(u,v)$ arcs of weight $2^l$. This does not change either the total outgoing or incoming demand of any node.
\item Finally, we construct the CTS and derive the resulting bounds.
\end{enumerate}

\smallskip\emph{Part (a).} We use Lemma \ref{lem:krvtopath} to construct a \emph{deterministic} concurrent transformation scheme $\mathit{TS}_1$ routing the PMCF. Hence we can have $P'_{u,v}$ denote the path system containing the paths from $u$ to $v$, one for each path id. Then we employ Valiant's trick and define $P^*_{u,v}$ as $\bigcup_{w\in V}P_{u,w}'\circ P_{w,v}'$, where $P\circ P'$ is a concatenation of path systems $P,P'$ given by $P\circ P':=\{p\circ p':p\in P,p'\in P'\}$. That means that we can split a path in $P^*_{u,v}$ into its first and second part.

Now consider some demands $d'$, where the incoming or outgoing demand of any node $v$ is at most $\we(v)$. As $\bigcup_{w\in V}P_{u,w}'$ are all outgoing paths of $u$, sampling one u.a.r.\ is equivalent to sending a packet with $\mathit{TS}_1$ from $u$. So the first parts create the same congestion as $\mathit{TS}_1$, given that $\sum_vd(u,v)\le\we(v)$. To be precise, the congestion increases by a factor of $1+\mathcal{O}(n^{-1})$, the approximation guaranteed by Lemma \ref{lem:krvtopath}. This is only a constant factor, so we are going to disregard it.

The intermediate node $w$ follows distribution $\we$. The second parts, i.e., the paths from $P_{w,v}'$ then have weight $\sum_u\we(w)d(u,v)\le\we(w)\we(v)$. Routing a packet from $w$ using $\mathit{TS}_1$ chooses a path from $P_{w,v}'$ with weight $\bwe(v)$, so we also bound the congestion based on $\mathit{TS}_1$.

In total we get $\cong(P^*,d')\le 2\cong(\mathit{TS}_1,\we)\in\mathcal{O}(\Co\log^2n)$. Finally, we want to modify $P^*_{u,v}$ so that it only contains paths of one class. We simply pick a class $\ell$ with the maximum number of paths in $P^*_{u,v}$, and set $P_{u,v}:=\{p\in P^*_{u,v}:p\text{ has class $l$}\}$. As there are $\Nclass$ classes, we have $|P^*_{u,v}|\le|P_{u,v}|\Nclass$ and the congestion increases by $\mathcal{O}(\log W)$.

Each path in $P_{u,v}$ is the concatenation of two paths from $\mathit{TS}_1$, so we can store two path ids of $\mathit{TS}_1$ to route it.

\smallskip\emph{Part (b).} We choose the highest class $\ell$ where the set $A'$ of $\ell$-large arcs is non-empty. Additionally, we introduce $\operatorname{str}:A\rightarrow V$, which is the node that will be used to store the routing information for an arc. Initially, $\operatorname{str}(u,v)=u$.

For any arc $a\in A'$ we define $d'(a)$ as the largest multiple of $2^\ell$ s.t.\ $d'(a)\le d(a)$, and then set $d:=d-d'$. So when routing $a$, a coin is flipped. With weight $d(a)$ we route using $a$ (how precisely is yet to be determined), with weight $d'(a)$ we do the following procedure.

For all $(u,v)\in A'$ the path system $P_{u,v}$ has class $\ell$. Using Lemma \ref{lem:randomizer2} with demands $d'$, we find a set of $d'(u,v)2^{-l}$ class $\ell$ paths from $u$ to $v$ for $(u,v)\in A$, with congestion $\mathcal{O}(\Co\log^2n\log W)$. We let $M$ denote the set of prefixes of these paths, up to (and including) their first class $\ell$ edge. By treating $M$ them as a flow $f$, we can construct a transformation scheme $\mathit{TS}$, using Lemma \ref{lem:sctopath}, which has the same congestion.

Going back to the arc $a$ we want to route, we send a packet using $\mathit{TS}$, with path id chosen uniformly at random. The necessary information for this is stored in $\operatorname{str}(a)$. There are $N:=d'(a)2^{-l}$ paths in $M$ for $a$ (and thus path ids of $\mathit{TS}$). If we put that number of tokens into the source of $a$ and apply $\mathit{TS}$ (one path id per token), they end up at nodes $z_1,...,z_N$. A node $z_i$ may receive tokens from other demands $a'\in A'$, but at most $\mathcal{O}(\deg(z_i)\Co\log^2n\log W)$ in total, as each path ending in $z_i$ in $M$ induces a load of $2^\ell$ on a class $\ell$ edge adjacent to $z_i$, i.e., a congestion of 1.

We also construct a transformation scheme based on the reverse flow $-f$, to send the tokens back. This does \emph{not} use a random path id, instead a node $z_i$ stores a mapping from incoming to outgoing path ids (any mapping is fine). We remark that the tokens of $a$ may not end up where they started, as routing through a single-commodity flow mixes packets arbitrarily. 

To summarize, an arc $a=:(u,v)$ has sent out $N$ tokens, each of which corresponds to $2^\ell$ flow from $d'(a)$. Each token traversed an intermediate node $z_i$ to end up at a node $u'$. Node $z_i$ was passed by a low number of tokens in total. So now we add a new arc $a':=(u',v)$ to $A$, with demand $d(a'):=2^\ell$. Crucially, the routing information for $a'$ is stored in $z_i$, i.e., $\operatorname{str}(a'):=z_i$.

As a technical detail we note that we allow for parallel arcs in $G'$. It is important that we do not merge multiple small demands into a larger one, as we have already ensured sufficient storage space for each, which would be lost.

The tokens are routed through $f$ and then $-f$, so the number of tokens starting and ending at $u$ is the same. This implies that both the total outgoing and incoming demand of any node remain unchanged.

As mentioned above, this procedure uses anticipative routing. For demand $a$ we send $N$ packets from $u$, each of which follows a deterministic path. So the intermediate node $z_i$ assigns the packet the specific path id sending it to $u'$ as well as the (yet to be determined) information on how to proceed from there. At $u'$ the node does not have to look up the packet header in its routing table, but merely execute the information contained within.

\smallskip\emph{Part (c).} First, we apply (b) at most $\Nclass$ times to eliminate all large arcs. Note that while (b) introduces new arcs, these have demand $2^\ell$, where $\ell$ is maximum class s.t.\ $\ell$-large arcs exist. So the new demands can only be $\ell'$-large for an $\ell'<\ell$.

Now we route the remaining arcs. Those are not large, so we can use Lemma \ref{lem:randomizer2} to pick a single path from $P_{u,v}$ for each $(u,v)\in A$. Based on our construction in (a), each path in $P_{u,v}$ can be routed using a $\mathcal{O}(\log^3(nW))$ path id. This will be stored in $\operatorname{str}(u,v)$.

For the initial arcs, we store their path ids within their respective source nodes together with their (encoded) commodity.

Finally, we analyze the congestion and space requirements.

Each use of (b) creates congestion of $\mathcal{O}(\Co\log^2n\log W)$, due to embedding two flows. Routing the non-large arcs at the end creates the same congestion (though only once). So in total we have a congestion of $\mathcal{O}(\Co\log^2n\log^2W)$.

In total, each node $v$ is used at most $\deg_{G'}(v)\in\mathcal{O}(\deg(v)\log^2n)$ times for storage due to our initial demands, and then at most $\cdot\mathcal{O}(\Co\log^2n\log W)$ times for each adjacent class $\ell$ edge when executing (b) for class $\ell$. Storing routing information for a large arc needs $\mathcal{O}(\log (nW))$ additional space to store the number of tokens and the range of path ids for them. This is dominated by the $\mathcal{O}(\log^3(nW))$ sized path id we need for both large and non-large arcs. (Recall that a large demand is first split into a fractional part and a multiple of $2^\ell$.)

To embed the flows in (b) using Lemma \ref{lem:sctopath}, we need a total of $\mathcal{O}(\deg(v)\log (nW)\log W)$ space per node $v$, and transformation scheme in (a) from Lemma \ref{lem:krvtopath} uses $\mathcal{O}(\deg(v)\log^3(nW)))$ space. Summing everything up, we get $\mathcal{O}(\deg(v)\Co\log^2n\log^2W\log^3(nW))$.

Regarding packet headers, we need packet headers of Lemmata \ref{lem:sctopath} and \ref{lem:krvtopath}, as well as some additional space for our anticipative routing (at most $\mathcal{O}(\log^3(nW))$). In total we get $\mathcal{O}(\log^3(nW))$.
\end{proof}

We want to remark on a slight technicality in the previous proof. Usually, scaling the routed distributions by some constant factor will scale the congestion by the same and nothing of importance has changed. However, the proof argues that there is a bound on the space used for each node, based on the congestion. Scaling the routed distribution to decrease congestion does actually affect this bound, so we could try scaling the congestion even lower. Though, as it turns out it is not possible to get a congestion below $\mathcal{O}(\Co\log^2n\log W)$ as that is the minimum when fixing a single path provided by part (a). Using a fractional path would indeed have lower congestion, but not take up less space.

\paragraph*{Hypercube embedding}
Now we move on to the hypercube embedding. Consider some cluster $S$ with children $S_1,...,S_r$. The general idea is that we assign each node $v$ some $w_S$ hypercube ids, by giving each child cluster $S_i$ an interval of $w_S(S_i)=\operatorname{out}_{S_i}(S_i)$ hypercube ids, distributed according to $\operatorname{out}_{S_i}$. (Recall that $w_S=\sum_i \operatorname{out}_{S_i}$.) Of course, this does not quite make a hypercube, so we have to skew the distributions by at most some constant factor so that everything ends up in a power of two.

\begin{figure}[t]
\begin{center}
\def\svgwidth{10.7cm}\import{figures/}{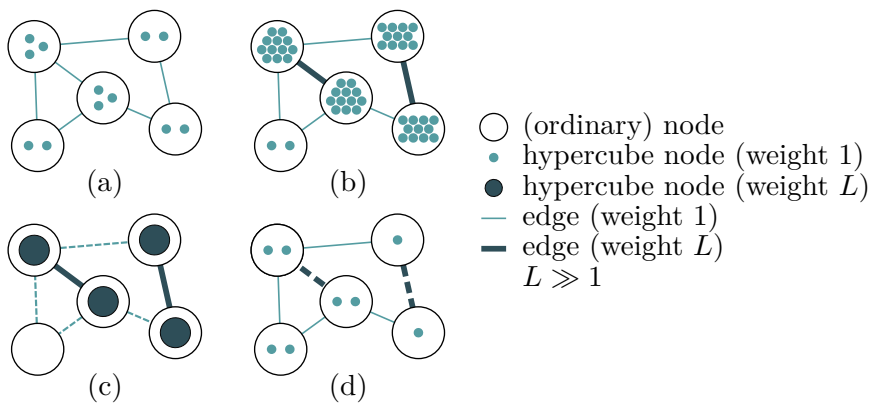}
\end{center}
\caption{Embedding a hypercube in a cluster $S$. If all edges have weight 1, assigning hypercube nodes according to $w_S$ ensures that no node receives more than its degree (a). However, if edges with large weights exist, this no longer works (b). Instead, we assign hypercube nodes according to edges of one class, here either edges with weight $L$ (c) or weight 1 (d).}
\label{fig:hypercube}
\end{figure}

Then there is a second problem, illustrated in Figure \ref{fig:hypercube}. The main reason for embedding a hypercube is to reduce the number of paths a node has to store to reach any target. Within the hypercube, each node has logarithmic degree but we can still route to any node due to the special structure of the hypercube. We then leverage that to route to the interval assigned to child cluster $S_i$, effectively routing to $\operatorname{out}_{S_i}$.

However, we need to ensure that a node $v$ is assigned $\tildeO(\operatorname{deg}(v))$ ids. If there are much more than that, we cannot store routing information in $v$ for its adjacent hypercube edges. The original result in \cite{raecke2018compact} had $w_S(v)\le\operatorname{deg}(v)$ due to unit weights, but we do not. Instead we will assign a node $v$ roughly $w^{(\ell)}_S(v)/2^\ell$ hypercube ids, meaning one for each adjacent class $\ell$ edge contributing to $w_S$.

There are, of course, at most $\operatorname{deg}(v)$ such edges, so we do not run into storage problems. But if the distributions $w^{(\ell)}_S$ and $w_S$ are too dissimilar then we cannot route between them with low congestion. (The PMCF only ensures that distributions close to $w_S$ can be routed well.) Hence we need to choose the class $\ell$ carefully, so that $w^{(\ell)}_S(S_i)$ contains enough edges and we do not have to put too much flow on any single node.

This creates another complication, as different child clusters may necessitate different choices of $\ell$. While child clusters may have different classes, there are only $\Nclass$ many classes. So we will implement a hypercube for each of them, and later have a flow for each class which sends the data into the initial distribution for the specific hypercube.

\begin{lemma}[Hypercube embedding]\label{lem:routinghyp}
Let $S$ be an arbitrary cluster with children $S_1,...,S_r$. There exists a compact CTS that routes $\RouteArg[3pt]{w_S(S_i)}{\operatorname{maj}^{(\ell)}_S}{\operatorname{out}^{(\ell)}_{S_i}}$ for each $S_i$ of class $\ell$ with approximation $2$ and congestion $\mathcal{O}(\Co\log^3n\log^3W)$. 

The routing table of node $v$ has size $\mathcal{O}(\deg(v)\Co\log^2n\log W\log^3(nW))$, while packet headers have length $\mathcal{O}(\log^3(nW))$. There exists a numbering of $S_1,...,S_r$ s.t.\ commodity $S_i$ is encoded as integer $i$.
\end{lemma}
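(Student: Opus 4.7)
The plan is to realize, for each class $\ell\in\{0,\dots,\Nclass-1\}$, a directed hypercube graph $G'_\ell$ on $S$ in which each node $v$ holds a number of hypercube identifiers roughly equal to $w^{(\ell)}_S(v)/2^\ell$ ― that is, one identifier per class-$\ell$ boundary edge of $v$ contributing to $w_S$. The collection of hypercubes is then embedded into $G[S]$ through Lemma~\ref{lem:krmtoperm}, and classical hypercube routing delivers each packet to the correct interval of identifiers.

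\textbf{Step 1 (assigning hypercube identifiers).} Fix a class $\ell$. For each $v\in S$, let $k_\ell(v)$ be the largest power of two with $k_\ell(v)\le w^{(\ell)}_S(v)/2^\ell$. Let $N_\ell$ be the largest power of two at most $\sum_v k_\ell(v)$; the identifiers of $G'_\ell$ are the $N_\ell$ bitstrings in $\{0,1\}^{d_\ell}$ with $d_\ell=\log_2 N_\ell=\mathcal{O}(\log(nW))$. Partition these identifiers into contiguous intervals, giving $k_\ell(v)$ consecutive identifiers to node $v$, and within that block sub-intervals to the children $S_i$ of class $\ell$, sized so that $S_i$ gets a power-of-two number of identifiers closest from below to $\operatorname{out}^{(\ell)}_{S_i}(v)/2^\ell$. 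Because we round down by at most a factor of two in each step, the distribution of identifiers over $v$ (resp.\ over the block of $S_i$) approximates $\operatorname{maj}^{(\ell)}_S$ (resp.\ $\operatorname{out}^{(\ell)}_{S_i}$) within a factor of $2$, which will account for the stated approximation $2$. Fix a numbering of the children so that $S_i$'s class and interval can be read off from the integer $i$ alone; this realizes the required commodity encoding.

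\textbf{Step 2 (embedding the hypercubes).} Define $G'_\ell=(S,A_\ell,d_\ell)$ by placing, for each pair of identifiers held by nodes $u,u'$ that differ in exactly one bit, a pair of arcs $(u,u'),(u',u)$ of weight $2^\ell$. Since $v$ holds at most $k_\ell(v)\le\deg(v)$ identifiers and each identifier has $d_\ell$ hypercube neighbours, the in- and out-degree of $v$ in $G'_\ell$ is $\mathcal{O}(\deg(v)\log(nW))$, and the total incident arc weight is $\mathcal{O}(w^{(\ell)}_S(v))\le w_S(v)$. Taking the disjoint union over all $\Nclass$ classes produces a graph $G'$ still satisfying the hypotheses of Lemma~\ref{lem:krmtoperm} (the degree grows by the factor $\Nclass=\mathcal{O}(\log W)$, still polylogarithmic; incoming and outgoing weight per node stays bounded by $w_S(v)$). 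Applying Lemma~\ref{lem:krmtoperm} yields a compact CTS $\TS_{\mathrm{emb}}$ realizing every arc $(u,u')\in A_\ell$ as a flow of value $2^\ell$ with congestion $\mathcal{O}(C\log^2n\log^2W)$.

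\textbf{Step 3 (routing within the hypercube).} For commodity $S_i$ of class $\ell$ we need a CTS from $\operatorname{maj}^{(\ell)}_S$ to $\operatorname{out}^{(\ell)}_{S_i}$, i.e.\ inside $G'_\ell$ from the uniform distribution on \emph{all} identifiers to the uniform distribution on the identifiers of $S_i$. Do this in the standard two-phase way: route each packet first to a uniformly random identifier (a \emph{bit-reversing} or Valiant phase), then to the identifier differing from its target by its current bit string using \emph{bit-fixing}. By classical hypercube arguments (\cite{VB81}), this produces expected congestion $\mathcal{O}(d_\ell)=\mathcal{O}(\log(nW))$ on every hypercube arc. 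Composed with the embedding $\TS_{\mathrm{emb}}$, the congestion in $G$ is $\mathcal{O}(C\log^3n\log^3W)$, matching the claimed bound. The random bits consumed in both phases are recorded in the packet header (length $\mathcal{O}(\log(nW))$), and added to the header cost $\mathcal{O}(\log^3(nW))$ inherited from Lemma~\ref{lem:krmtoperm}; storage per node is dominated by the embedding, giving the stated $\mathcal{O}(\deg(v)\,C\log^2n\log W\log^3(nW))$ bound.

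The principal obstacle is Step 1: we must ensure that the power-of-two roundings on the node side (giving $k_\ell(v)$) and on the cluster side (giving the size of each $S_i$'s block) nest consistently, so that every identifier belongs to exactly one $v$ \emph{and} to exactly one $S_i$, while still approximating $\operatorname{maj}^{(\ell)}_S$ and $\operatorname{out}^{(\ell)}_{S_i}$ up to factor $2$. This is a packing/alignment question reminiscent of the unit-weight construction in \cite{raecke2018compact}, but here it must be repeated per class and per child, and the residual slack must be absorbed into the approximation guarantee rather than into the congestion.
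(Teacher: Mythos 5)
Your high-level plan — one hypercube per class, roughly one identifier per class-$\ell$ boundary edge at a node, embed via Lemma~\ref{lem:krmtoperm}, route by Valiant's trick inside the cube — is exactly the paper's strategy. But the obstacle you flag in Step~1 is a genuine gap, and your rounding scheme as stated does not close it. Rounding each node's share down to a power of two $k_\ell(v)$ is the wrong granularity: the block that child $S_i$ should occupy is $\sum_{v\in S_i}k_\ell(v)$, a sum of powers of two that is itself generally not a power of two, so the per-child intervals cannot simultaneously be nice. Worse, taking $N_\ell$ to be the \emph{largest} power of two \emph{at most} $\sum_v k_\ell(v)$ leaves up to half the nodes' blocks with no identifiers at all. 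The paper resolves this by rounding in the opposite direction and at the other level: each child $S_i$ gets a power-of-two block size $\norm{S_i}$ with $\operatorname{out}_S^{(\ell)}(S_i)/2^\ell\le\norm{S_i}\le 2\operatorname{out}_S^{(\ell)}(S_i)/2^\ell$, the per-node counts inside $S_i$ are allowed to be arbitrary integers in $[\operatorname{out}_S^{(\ell)}(v)/2^\ell,\ 2\operatorname{out}_S^{(\ell)}(v)/2^\ell]$ (no power-of-two requirement per node), the dimension is the smallest $d$ with $2^d\ge\sum\norm{S_i}$, and the $2^d-\sum\norm{S_i}$ leftover identifiers are distributed evenly as padding so that every node ends up with at most $4\operatorname{out}_S^{(\ell)}(v)/2^\ell$ identifiers. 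This is what absorbs the slack into the approximation factor~$2$ rather than breaking the packing.

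A second under-specified point is the commodity encoding. You say a numbering should let $S_i$'s class and interval ``be read off from the integer $i$ alone,'' but the crux is that every node must be able to \emph{recompute} the interval from $i$ using only $\tildeO(\deg(v))$ local storage. The paper sorts the children lexicographically by $(\text{class},\norm{S_i})$ and has each node store only the aggregate counts — the number of children per class, and per class the number with each value of $\norm{S_i}$ — which costs $\mathcal{O}(\log r\log n\log W)$ bits and suffices to recover any child's range. Without something like this, your Step~1 either stores a per-child table (too large) or leaves the routing algorithm unable to locate the target interval. Finally, a minor accounting note: the hypercube dimension is $\mathcal{O}(\log n)$ (since each node holds at most $\deg(v)$ identifiers per class), and the paper charges the $\mathcal{O}(\log n)$ and $\mathcal{O}(\log W)$ overheads not to hypercube congestion (which is $\mathcal{O}(1)$ with Valiant's trick at these edge capacities) but to the degree/weight scaling needed to satisfy the hypotheses of Lemma~\ref{lem:krmtoperm} and to the ratio $\operatorname{out}_{S_i}(S_i)/\operatorname{out}^{(\ell)}_{S_i}(S_i)\le\Nclass$; your looser $\mathcal{O}(\log(nW))$ bound happens to give the same final exponent, but the provenance of the log factors is different.
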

\begin{proof}
In the same manner as Räcke and Schmid~\cite{raecke2018compact}, we embed a hypercube. However, we use a hypercube for each class $\ell$ of edges and each hypercube id we assign has weight $2^\ell$, i.e., a node $v\in S$ gets roughly $\operatorname{maj}_S^{(\ell)}(v)/2^\ell$ hypercube ids.

The construction proceeds in a similar manner as the one of Räcke and Schmidt up until the embedding of the hypercube edges, where we use Lemma \ref{lem:krmtoperm} instead of simple randomized rounding. We start by arguing that we can renumber the child clusters s.t.\ given the index we can determine both the class and the approximate weight of a child.
\smallskip

\emph{Storing child class and weight.}
For a child cluster $S_i$ with class $\ell$, let $\norm{S_i}$ denote the smallest power of two with $\operatorname{out}_S^{(\ell)}(S_i)/2^\ell\le\norm{S_i}\le 2\,\operatorname{out}_S^{(\ell)}(S_i)/2^\ell$. This is the number of hypercube nodes that we assign to $S_i$.

We store the number of children of each class, which takes $\mathcal{O}(\log r\log W)$ bits, in each node in $S$. Additionally, for each class $\ell$ we store the number of child cluster of that class which have a specific value of $\norm{S_i}$. There are at most $1+\log_2m$ different values for $\norm{S_i}$, so we need $\mathcal{O}(\log r\log n)$ bits. In total, this uses $\mathcal{O}(\log r\log n\log W)$ bits in each node in $S$.

For our renumbering, we sort child clusters $S_i$ by class and, within a class, by their value of $\norm{S_i}$. Given an index $i$ based on this sorting, we can determine both class of $S_i$ and $\norm{S_i}$.
\smallskip

\emph{Constructing the class $l$ hypercube.} 
Fix some class $\ell$. We will now describe the construction of the class $\ell$ hypercube, then analyze at the end the congestion for all classes at once.

The hypercube has dimension $d$, with $d\in\mathbb{N}$ minimal s.t.\
$2^d\ge\sum_{L(S_i)=\ell}\norm{S_i}$, where $L(S_i)$ is the class of $S_i$. Each
class $\ell$ child $S_i$ gets a range of $\norm{S_i}$ ids, distributed such that a
node $v\in S_i$ gets between $\operatorname{out}_S^{(\ell)}(v)/2^l$ and
$2\operatorname{out}_S^{(\ell)}(v)/2^\ell$ hypercube ids. These ids are stored in
$v$. As the order of children is fixed and stored within each node, we can
recompute the range of any child cluster during routing.

We have assigned $\sum_{i=1}^{r}\norm{S_i}$ hypercube ids in total, which may be less than $2^d$. Hence we distribute the other hypercube ids evenly across the nodes of class $l$ child clusters $S_i$, s.t.\ a node $v\in S_i$ receives at most $2\operatorname{out}_S^{(\ell)}(v)/2^\ell$ additional hypercube ids, and thus between $\operatorname{out}_S^{(\ell)}(v)/2^\ell$ and $4\operatorname{out}_S^{(\ell)}(v)/2^\ell$ in total. These other hypercube ids will only be used during routing as intermediate nodes.
\smallskip

\emph{Congestion within the hypercube.} 
Now consider some packet at a node $u\in S$ that we want to route to $S_i$. First we pick a hypercube node $x$ u.a.r.\ among those assigned to $u$ (they are stored in $u$). Then we pick a hypercube node $y$ u.a.r.\ from the range assigned to $S_i$ (which we can recompute). Then we route from $x$ to $z$, a random intermediate node in the hypercube, then from $z$ to $y$. 

We remark that, in a hypercube, the PMCF with weights $\we:=1$ can be solved with congestion $\mathcal{O}(1)$ and that this bound is achieved by routing in the usual manner, i.e., fixing an order for the bits and sending the packet along the edge according to the first bit different between source and target. As we are using Valiant's trick, the congestion is determined by the maximum incoming or outgoing amount of flow for a single node.

For the congestion, we consider routing $\RouteArg[4pt]{\operatorname{out}^{(\ell)}_S(S_i)}{\operatorname{maj}^{(\ell)}_S}{\operatorname{out}^{(\ell)}_{S_i}}$ for $S_i$ with class $\ell$, i.e.\ $\operatorname{out}^{(\ell)}_{S_i}(S_i)$ units of flow instead of $w_S(S_i)=\operatorname{out}_{S_i}(S_i)$. As $S_i$ has class $\ell$, we have $\operatorname{out}^{(\ell)}_{S_i}(S_i)\ge\operatorname{out}_{S_i}(S_i)/\Nclass$ and the congestion increases by a factor of at most $\Nclass$.

Summing up all $\operatorname{out}^{(\ell)}_S(S_i)$ we get $\operatorname{maj}^{(\ell)}_S(S)$, so a node $v\in S_i$ sends out $\operatorname{out}_S^{(\ell)}(v)$ packets, and each hypercube node $x$ of $v$ sends at most $2^\ell$ of them. For commodity $i$ there are $\norm{S_i}\ge\operatorname{out}_S^{(\ell)}(S_i)/2^\ell$ hypercube nodes, so each receives at most $2^\ell$ packets.

Both outgoing and incoming flow of a hypercube node are at most $2^\ell$, so the load on a hypercube edge is also at most $\mathcal{O}(2^\ell)$.

While we send to a hypercube node from the range of $S_i$ u.a.r., a node $v\in S_i$ is assigned between $\operatorname{out}_S^{(\ell)}(v)/2^\ell$ and $2\operatorname{out}_S^{(\ell)}(v)/2^\ell$ of them. Hence the target distribution is only within an approximation of 2.
\smallskip

\emph{Embedding into the original graph.} 
Finally, we embed the hypercube using Lemma \ref{lem:krmtoperm}. A node $v\in S_i$ for $S_i$ of class $\ell$ has at most $4\operatorname{maj}_S^{(\ell)}(v)/2^\ell\le 4\deg(v)$ hypercube ids. So there are at most $8m$ nodes in the hypercube in total, and the degree of each node is $\mathcal{O}(\log n)$. Let $d_\ell(u,v)$ denote the number of edges connecting $u$ and $v$ in the class $\ell$ hypercube, for $u,v\in S$, and $d:=\sum_\ell 2^\ell d_\ell$. Setting $A:=\{(u,v):d(u,v)>0\}$ we embed the graph $G':=(S,A,d)$.

As the load on an edge of the class $\ell$ hypercube is at most $2^\ell$, in total $d(u,v)$ packets are sent from $u$ to $v$. A node $v$ in a class $\ell$ child has outgoing and incoming demand at most $\mathcal{O}(k2^\ell\log n)\subseteq\mathcal{O}(w_S(v)\log n)$, where $k$ is the number of class $l$ edges incident to $v$. The congestion of Lemma \ref{lem:krmtoperm} increases by $\mathcal{O}(\log W)$ due to decreasing the total number of packets earlier in our analysis, and $\mathcal{O}(\log n)$ due to the outgoing and incoming demand of a node.

While we use an additional $\mathcal{O}(\log r\log n\log W)$ space per node $v$ to store the sizes of clusters, and $\mathcal{O}(\deg(v)\log n)$ to store the hypercube ids of nodes assigned to $v$, this is dominated by the cost of Lemma \ref{lem:krmtoperm}, which also determines the sizes of packet headers.
\end{proof}

\paragraph*{Unmixing CTS}
Given the hypercube embedding from the last lemma, we can now construct the unmixing CTS. At the beginning we need to ensure that we move to the distribution for the correct class, then we move through the (class specific) hypercube, and finally we go to the target distribution.

\begin{lemma}[Unmixing CTS]\label{lem:unmixing}
There exists a CTS that routes $\RouteArg[3pt]{w_S(S_i)}{w_S}{w_{S_i}}$ for each $i=1,...,r$ with congestion $\mathcal{O}(\Co\log^3n\log^3W)$. The routing table of node $v$ has size $\mathcal{O}(\deg(v)\Co\log^2n\log W\log^3(nW))$, while packet headers have length $\mathcal{O}(\log^3(nW))$. There exists a numbering of $S_1,...,S_r$ s.t.\ commodity $S_i$ is encoded as integer $i$.
\end{lemma}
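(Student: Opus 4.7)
The plan is to compose three concurrent transformation schemes that implement the three-stage description preceding the statement, and to adopt throughout the numbering of $S_1,\ldots,S_r$ furnished by Lemma~\ref{lem:routinghyp}, so that from the integer $i$ one can recover both the class $\ell$ of $S_i$ and its approximate weight $\norm{S_i}$. Packets carry the target index $i$ in their header together with an $\mathcal{O}(1)$-bit field indicating the current stage.

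In the first stage, for each class $\ell$ we route concurrently from $w_S$ to $\operatorname{maj}^{(\ell)}_S$ an aggregated demand of $\sum_{i:\mathrm{class}(S_i)=\ell} w_S(S_i)$. Since $\operatorname{maj}^{(\ell)}_S\le w_S$, Lemma~\ref{lem:routingsimilar} applied in $G[S]$ with $c:=w_S$ yields congestion $\mathcal{O}(\Co)$ per class, and hence $\mathcal{O}(\Co\log W)$ summed over the $\Nclass$ classes. The commodity in this stage is just $\ell$, which the routing algorithm determines from $i$. The second stage is exactly the hypercube CTS of Lemma~\ref{lem:routinghyp}, routing $\operatorname{maj}^{(\ell)}_S\to\operatorname{out}^{(\ell)}_{S_i}$ with congestion $\mathcal{O}(\Co\log^3 n\log^3 W)$ and approximation $2$, using $i$ directly. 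In the third stage, for each $S_i$ we route $\operatorname{out}^{(\ell)}_{S_i}\to w_{S_i}$ inside $G[S_i]$ using Lemma~\ref{lem:routingsimilar} with $c:=w_{S_i}$, mirroring the corresponding step of the mixing construction in Lemma~\ref{lem:mixing} and again contributing $\mathcal{O}(\Co\log W)$ summed over classes.

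Concatenating these three sub-schemes yields a CTS routing $\RouteArg[3pt]{w_S(S_i)}{w_S}{w_{S_i}}$ for each $i$. Stage~2 dominates all three cost measures: the total congestion is $\mathcal{O}(\Co\log^3 n\log^3 W)$, per-node space is $\mathcal{O}(\deg(v)\Co\log^2 n\log W\log^3(nW))$, and packet headers have length $\mathcal{O}(\log^3(nW))$; stages~1 and~3 add only $\mathcal{O}(\deg(v)\log W\log(nW))$ space and $\mathcal{O}(\log(nW))$ header bits, plus the stage indicator.

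The main obstacle is really bookkeeping: threading the commodity index through the three sub-CTSs so that stages~1 and~3 can recover the class $\ell$ from $i$ (using the per-class child-count tables already built by Lemma~\ref{lem:routinghyp}) and so that $i$ itself is preserved unchanged across stage~2. The approximation factor $2$ of stage~2 and the $\Nclass=\mathcal{O}(\log W)$ blow-up incurred when aggregating demands by class are both absorbed into the $\log W$ factors already accounted for in the congestion bound.
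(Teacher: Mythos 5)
Your three-stage composition — Lemma~\ref{lem:routingsimilar} to move from $w_S$ to $\operatorname{maj}_S^{(\ell)}$, Lemma~\ref{lem:routinghyp} for the per-class hypercube step, and Lemma~\ref{lem:routingsimilar} again inside each $S_i$ — together with reusing the child numbering from Lemma~\ref{lem:routinghyp} to recover the class $\ell$ from the index $i$, is exactly the paper's proof. The one slip is in stage~1: Lemma~\ref{lem:routingsimilar} gives congestion $\mathcal{O}(\Co)$ only for demand $\operatorname{maj}_S^{(\ell)}(S)$, whereas you route $w_S(M)$, which can be an $\Nclass$-factor larger, so the per-class congestion is $\mathcal{O}(\Co\log W)$ and the stage-1 total is $\mathcal{O}(\Co\log^2 W)$ rather than $\mathcal{O}(\Co\log W)$ — still dominated by stage~2, so the final bounds are unchanged.
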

\begin{proof}
The numbering of child clusters and our path ids are the same as for Lemma \ref{lem:routinghyp}. Therefore we can determine the class $\ell$ of $S_i$ based on its index, as shown in the proof of that lemma.

For a child $S_i$ with class $\ell$ we want to route $\Route{w_S}{\operatorname{maj}_S^{(\ell)}}\Route{}{\operatorname{out}_{S_i}^{(\ell)}}\Route{}{\operatorname{out}_{S_i}}$.

\begin{enumerate}[(1)]
\item For each class $l$ let $M\subseteq S$ denote the union of class $l$ child clusters. We route $\RouteArg[3pt]{w_S(M)}{w_S}{\operatorname{maj}_S^{(\ell)}}$ using Lemma \ref{lem:routingsimilar} with congestion $\Co\cdot w_S(M)/\operatorname{maj}_S^{(\ell)}(M)$. This is at most $\Co\Nclass$, as $\operatorname{maj}_{S}^{(\ell)}(S_i)=\operatorname{out}_{S_i}^{(\ell)}(S_i)\ge w_S(S_i)/\Nclass$ for each child $S_i$ with class $l$.
\item We use Lemma \ref{lem:routinghyp} once, to route $\RouteArg[3pt]{w_S(S_i)}{\operatorname{maj}^{(\ell)}_S}{\operatorname{out}^{(\ell)}_{S_i}}$, with congestion $\mathcal{O}(\Co\log^3n\log^3W)$.
\item For each $S_i$ we route $\RouteArg[3pt]{w_S(S_i)}{\operatorname{out}_{S_i}^{(\ell)}}{\operatorname{out}_{S_i}}$ \emph{within $S_i$} using Lemma \ref{lem:routingsimilar}. Here we have congestion $\Co\cdot w_S(S_i)/\operatorname{out}_{S_i}^{(\ell)}(S_i)\le \Co\Nclass$.
\end{enumerate}

Note that (1) has to be implemented on the whole cluster for each class, so its total congestion is $\mathcal{O}(\Co\log^2W)$ (but still lower than step (2)). For the bounds on space per node and length of packet headers, the costs of step (2) dominate.
\end{proof}

\subsection{Combining the Results}\label{sec:combining}

Lemma \ref{lem:routinghyp} can be used directly as a drop-in replacement in the original result in \cite{raecke2018compact}. However, we have organized things slightly differently and thus feel it necessary to repeat the analysis.

The key idea is routing between two nodes $u$ and $v$ using the decomposition tree, spreading out a packet according to distribution $w_S$ in each cluster. This ensures that routing within a cluster can be done with low congestion. Moving through the tree, the congestion is determined by the bottlenecks $\operatorname{out}_S$. However, the optimal algorithm has to send the packets through these bottlenecks as well, so we remain competitive.

\begin{theorem}\label{thm:finalresult}
There exists a compact oblivious routing scheme with competitive ratio $\mathcal{O}(\log^6n\log^3W)$, using a routing table of length $\mathcal{O}(\deg(v)\log^5n\log W\log^3(nW))$ for a node $v\in V$, packet headers of length $\mathcal{O}(\log^3(nW))$, and node labels of length at most $\mathcal{O}(\height(T)\log\deg(T))$.
\end{theorem}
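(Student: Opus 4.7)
The plan is to build the oblivious routing scheme exactly as outlined in Section~\ref{sec:decomp}: route between $s$ and $t$ by following the unique path in the decomposition tree $T$ from $\{s\}$ up to $\mathrm{LCA}(s,t)$ and back down to $\{t\}$, where each upward step is handled by the mixing CTS of Lemma~\ref{lem:mixing} and each downward step by the unmixing CTS of Lemma~\ref{lem:unmixing}. Concretely, the node label of $v$ will encode the sequence of child-indices on the root-to-leaf path in $T$ reaching $\{v\}$; this takes $\mathcal{O}(\height(T)\log\deg(T))$ bits and meets the requested label bound. Given a packet header containing the target label, any intermediate node can read off which subtree the target lies in and therefore decide whether to invoke the mixing CTS (the target is outside the current cluster) or the unmixing CTS with the appropriate child-commodity (the target is inside). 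Since the mixing and unmixing CTS already encode the child index as an integer, routing through a cluster $S$ only requires reading one index from the label.

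The first thing to verify is correctness of the flow that a packet follows. Starting at $\one_s$, after climbing one level of the tree the mixing CTS converts distribution $w_{\{s\}}$ into $w_{S_2}$, then $w_{S_3}$, and so on, up to $w_{\mathrm{LCA}}$; going back down, the unmixing CTS successively produces $w_{S'_j},\ldots,w_{\{t\}}=\one_t$. So the overall unit flow between each source-target pair is well-defined, and the resulting routing algorithm is compact (header and label sizes come from the CTS lemmas plus the node label size, both $\tildeO(1)$).

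For the competitive ratio, fix demands $d$ with optimal congestion $C_{\mathrm{opt}}(d)$. The standard argument sketched in Section~\ref{sec:decomp} shows that the aggregate demand routed over the conceptual tree edge from $S_i$ to its parent $S$ is at most $w_S(S_i)\cdot C_{\mathrm{opt}}(d)$: every packet crossing that tree edge must use a real edge in $\operatorname{out}_{S_i}$, and OPT routes the corresponding demand across that cut using at most $\operatorname{out}_{S_i}(S_i)\cdot C_{\mathrm{opt}}(d) = w_S(S_i)\cdot C_{\mathrm{opt}}(d)$ units of flow. This matches exactly the commodity demand that Lemmata~\ref{lem:mixing} and~\ref{lem:unmixing} are set up to handle. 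Each real edge of $G$ lies in at most $\height(T)=\mathcal{O}(\log n)$ clusters, and in each cluster it carries at most $\mathcal{O}(C\log^3 n\log^3 W)\cdot C_{\mathrm{opt}}(d)$ load by Lemma~\ref{lem:unmixing} (which is the worse of the two CTS bounds). Plugging in $C = \mathcal{O}(\log^2 n)$ and multiplying by $\height(T)$ yields competitive ratio $\mathcal{O}(\log^6 n\log^3 W)$.

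For the table size, each vertex $v$ participates in exactly $\height(T)=\mathcal{O}(\log n)$ clusters, and in each it stores at most $\mathcal{O}(\deg(v)\cdot C\log^2 n\log W\log^3(nW))$ bits by Lemma~\ref{lem:unmixing}. Summing over all clusters containing $v$ gives the claimed $\mathcal{O}(\deg(v)\log^5 n\log W\log^3(nW))$. I expect the main obstacle to be purely organizational: ensuring that the per-cluster commodity encodings chosen in Lemmata~\ref{lem:mixing} and~\ref{lem:unmixing} are consistent with the node-label convention (i.e.\ that the child index we extract from the label is the one the CTS expects), and that the packet header carries enough state to survive all $\mathcal{O}(\log n)$ stages of mixing and unmixing without having to be re-initialized between levels. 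Everything else is bookkeeping that follows directly from the earlier lemmas.
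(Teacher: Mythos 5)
Your proposal matches the paper's own proof in all essentials: route along the decomposition-tree path from $\{s\}$ to $\{t\}$ using the mixing CTS (Lemma~\ref{lem:mixing}) on upward steps and the unmixing CTS (Lemma~\ref{lem:unmixing}) on downward steps, bound the per-cluster demand by $w_S(S_i)\cdot C_{\mathrm{opt}}$ via the $\operatorname{out}_{S_i}$ cut argument, and multiply the per-cluster congestion and space bounds by $\height(T)\in\mathcal{O}(\log n)$. The only point the paper spells out that you leave implicit is that the multiplicative approximation error $1+\mathcal{O}(n^{-1})$ from the mixing step accumulates over $\mathcal{O}(\log n)$ stages to a constant factor, which you gesture at but do not bound; this is a small omission and the argument is otherwise the same.
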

\begin{proof}
The analysis is mostly analogous to \cite[Lemma 2]{raecke2018compact}, apart from the slight change that Lemmata \ref{lem:mixing} and \ref{lem:unmixing} route directly between $w_S$ and $w_{S_i}$ instead of splitting into an upper and lower sub-path.

To route from node $u$ to $v$, we determine the clusters $S_u,S_v$ containing just these nodes, i.e., $S_u=\{u\}$ and $S_v=\{v\}$. Let $p$ denote the path in the decomposition tree from $\{u\}$ to $\{v\}$, which has length $k\in\mathcal{O}(\log n)$. We start in distribution $\bar{w}_{S_u}(V)=\one_u$, and want to end at $\bar{w}_{S_v}(V)=\one_v$. This is done by going through the sequence of distributions $w_{p_1},w_{p_2},...,w_{p_k}$, routing from $w_{p_i}$ to $w_{p_{i+1}}$ using Lemma \ref{lem:mixing} if $p_{i+1}$ is the parent of $p_i$, and Lemma \ref{lem:unmixing} otherwise. 

We accumulate a slight multiplicative error of $1+\mathcal{O}(n^{-1})$ at each step, which is bounded by a constant factor in total, as we have at most $2\,\height(T)\in\mathcal{O}(\log n)$ steps. The final distribution is $\one_v$ and remains unchanged by any error, so this merely increases congestion by a constant.

It is necessary to determine the path through the decomposition tree, hence the label of a node $v$ consists of the path in the decomposition tree, encoded as a sequence of child cluster indices (given by Lemma \ref{lem:routinghyp}. These are enough to determine the full path, by looking at the node labels of the start and end node.

Now we analyze the competitive ratio. Let $d:V\times V\rightarrow\mathbb{R}$ denote demands.

Fix any edge $e\in E$. Load on $e$ is generated only when routing between distributions $w_{S_i}$ and $w_{S}$ for some cluster $S$ with child cluster $S_i$, where $S$ contains both endpoints of $e$. This uses that the routing between the two distributions happens inside of $S$, and does not generate load on any edge not fully contained. Sending a packet from $u$ to $v$ involves routing between distributions $w_{S_i}$ and $w_{S}$ only if one of $u,v$ is not in $S_i$ and the other one is, so the total demand for these is $\lambda(i):=\sum_{u\in S_i}\sum_{v\notin S_i}(d(u,v)+d(v,u))$.

However, the demand $\lambda(i)$ must enter or leave $S_i$ (and thus pass over an edge in $\operatorname{out}_{S_i}$) regardless of our specific routing scheme. So there are $\lambda(i)\le C_{\operatorname{opt}}\operatorname{out}_{S_i}(S_i)$ such packets at most, where $C_{\operatorname{opt}}$ is the optimal congestion for demands $d$. Using $w_S(S_i)=\operatorname{out}_{S_i}(S_i)$ we get $w_S(S_i)/\lambda(i)\le C_{\operatorname{opt}}$.

Applying Lemmata \ref{lem:mixing} and \ref{lem:unmixing} with $\Co\in\mathcal{O}(\log^2n)$ then results in a congestion of at most $\mathcal{O}(C_{\operatorname{opt}}\log^5n\log^3W)$, and for each node $v\in S$ it uses $\mathcal{O}(\deg(v)\log^4n\log W\log^3(nW))$ space, as well as packet headers of length $\mathcal{O}(\log^3(nW))$.

Both edges and nodes can be contained in at most $T_\mathrm{h}\in\mathcal{O}(\log n)$ clusters, giving the final bounds on congestion and space per node. For a packet we need to store the path through the decomposition tree, so $\mathcal{O}(\log n)$ path ids of length $\mathcal{O}(\log\deg(T))$ and the length of a packet header does not increase.

As mentioned above, we store the cluster indices in the label of a node $v$, for each cluster in which $v$ is contained, resulting in node labels of length $\mathcal{O}(T_\mathrm{h}\log\deg(T))$.
\end{proof}

\begin{corollary}\label{col:finalresultsmall}
Assume $W\in\mathcal{O}(\operatorname{poly}(n))$. Then there exists a compact oblivious routing scheme with competitive ratio $\mathcal{O}(\log^9n)$, using a routing table of length $\mathcal{O}(\deg(v)\log^9n)$ for a node $v\in V$, packet headers of length $\mathcal{O}(\log^3n)$ and node labels of length $\mathcal{O}(\log^2n)$.
\end{corollary}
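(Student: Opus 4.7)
The plan is straightforward: the corollary follows by directly substituting the hypothesis $W \in \mathcal{O}(\operatorname{poly}(n))$ into the bounds given by Theorem \ref{thm:finalresult}, together with the structural properties of the decomposition tree $T$ mentioned in Section \ref{sec:decomp}. The only non-routine piece is justifying the bound on the node labels.

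First I would observe that the assumption $W \in \mathcal{O}(\operatorname{poly}(n))$ implies $\log W \in \mathcal{O}(\log n)$ and hence $\log(nW) \in \mathcal{O}(\log n)$. Substituting into the four quantities from Theorem \ref{thm:finalresult}: the competitive ratio $\mathcal{O}(\log^6 n \log^3 W)$ becomes $\mathcal{O}(\log^9 n)$; the routing table size per node $\mathcal{O}(\deg(v) \log^5 n \log W \log^3(nW))$ becomes $\mathcal{O}(\deg(v) \log^9 n)$; and the packet header length $\mathcal{O}(\log^3(nW))$ becomes $\mathcal{O}(\log^3 n)$. All three of these substitutions are immediate.

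For the node labels, Theorem \ref{thm:finalresult} gives a bound of $\mathcal{O}(\height(T) \log \deg(T))$. I would invoke the two properties of the decomposition tree of \cite{racke2002minimizing} recalled in Section \ref{sec:decomp}: the tree has logarithmic height, so $\height(T) \in \mathcal{O}(\log n)$, and every cluster has at most $n$ children (since clusters are subsets of $V$), so $\log \deg(T) \in \mathcal{O}(\log n)$. Multiplying these bounds yields $\mathcal{O}(\log^2 n)$, which matches the statement.

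There is no serious obstacle: the entire content of the corollary is a two-line calculation applied to the bounds of Theorem \ref{thm:finalresult}. The only point that deserves an explicit word is that $\log \deg(T) \in \mathcal{O}(\log n)$ is not stated in the hypothesis but rather follows from the trivial fact that $\deg(T) \le n$ for any decomposition of an $n$-node graph into clusters.
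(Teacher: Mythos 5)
Your proposal is correct and is exactly the intended proof: the paper states the corollary without a separate proof precisely because it follows by the substitution $\log W,\ \log(nW)\in\mathcal{O}(\log n)$ into Theorem~\ref{thm:finalresult}, plus the observations $\height(T)\in\mathcal{O}(\log n)$ and $\deg(T)\le n$. Nothing is missing.
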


%\nocite{*}

\bibliography{references,esa}

\begin{thebibliography}{10}

\bibitem{ittai}
Ittai Abraham, Cyril Gavoille, and Dahlia Malkhi.
\newblock On space-stretch trade-offs: Upper bounds.
\newblock In {\em Proc. 18th Annual ACM Symposium on Parallelism in Algorithms
  and Architectures (SPAA)}, pages 217--224, 2006.

\bibitem{azar2004optimal}
Yossi Azar, Edith Cohen, Amos Fiat, Haim Kaplan, and Harald R{\"a}cke.
\newblock Optimal oblivious routing in polynomial time.
\newblock {\em Journal of Computer and System Sciences}, 69(3):383--394, 2004.

\bibitem{bartal1997line}
Yair Bartal and Stefano Leonardi.
\newblock On-line routing in all-optical networks.
\newblock In {\em Proc. International Colloquium on Automata, Languages, and
  Programming (ICALP)}, pages 516--526. Springer, 1997.

\bibitem{bienkowski2003practical}
Marcin Bienkowski, Miroslaw Korzeniowski, and Harald R{\"a}cke.
\newblock A practical algorithm for constructing oblivious routing schemes.
\newblock In {\em Proceedings of the 15th ACM Symposium on Parallelism in
  Algorithms and Architectures (SPAA)}, pages 24--33, 2003.

\bibitem{borodin1985routing}
Allan Borodin and John~E. Hopcroft.
\newblock Routing, merging, and sorting on parallel models of computation.
\newblock {\em Journal of computer and system sciences}, 30(1):130--145, 1985.

\bibitem{ip-ton}
Marco Chiesa, G{\'a}bor R{\'e}tv{\'a}ri, and Michael Schapira.
\newblock Oblivious routing in ip networks.
\newblock {\em IEEE/ACM Transactions on Networking (TON)}, 26(3):1292--1305,
  2018.

\bibitem{cowen2001compact}
Lenore~J Cowen.
\newblock Compact routing with minimum stretch.
\newblock {\em Journal of Algorithms}, 38(1):170--183, 2001.

\bibitem{Erd63}
Paul Erd{\H{o}}s.
\newblock Extremal problems in graph theory.
\newblock In {\em Proceedings of the Symposium on Theory of Graphs and its
  Applications}, pages 29--36, 1963.

\bibitem{fraigniaud1995memory}
Pierre Fraigniaud and Cyril Gavoille.
\newblock Memory requirement for universal routing schemes.
\newblock In {\em Proc. 14th Annual ACM Symposium on Principles of Distributed
  Computing (PODC)}, pages 223--230. ACM, 1995.

\bibitem{fraigniaud2001routing}
Pierre Fraigniaud and Cyril Gavoille.
\newblock Routing in trees.
\newblock In {\em Proc. International Colloquium on Automata, Languages, and
  Programming (ICALP)}, pages 757--772. Springer, 2001.

\bibitem{frederickson1988designing}
Greg~N Frederickson and Ravi Janardan.
\newblock Designing networks with compact routing tables.
\newblock {\em Algorithmica}, 3(1-4):171--190, 1988.

\bibitem{gavoille2001routing}
Cyril Gavoille.
\newblock Routing in distributed networks: Overview and open problems.
\newblock {\em ACM SIGACT News}, 32(1):36--52, 2001.

\bibitem{gavoille1996memory}
Cyril Gavoille and St{\'e}phane P{\'e}renn{\`e}s.
\newblock Memory requirement for routing in distributed networks.
\newblock In {\em Proc. 15th Annual ACM Symposium on Principles of Distributed
  Computing (PODC)}, pages 125--133. ACM, 1996.

\bibitem{harrelson2003polynomial}
Chris Harrelson, Kirsten Hildrum, and Satish Rao.
\newblock A polynomial-time tree decomposition to minimize congestion.
\newblock In {\em Proc. 15th Annual ACM Symposium on Parallel Algorithms and
  Architectures (SPAA)}, pages 34--43, 2003.

\bibitem{kaklamanis1991tight}
Christos Kaklamanis, Danny Krizanc, and Thanasis Tsantilas.
\newblock Tight bounds for oblivious routing in the hypercube.
\newblock {\em Mathematical Systems Theory}, 24(1):223--232, 1991.

\bibitem{khandekar2009graph}
Rohit Khandekar, Satish Rao, and Umesh Vazirani.
\newblock Graph partitioning using single commodity flows.
\newblock {\em Journal of the ACM (JACM)}, 56(4):19, 2009.

\bibitem{kodialam2009oblivious}
M.~{Kodialam}, T.V. {Lakshman}, J.B. {Orlin}, and S.~{Sengupta}.
\newblock Oblivious routing of highly variable traffic in service overlays and
  ip backbones.
\newblock {\em IEEE/ACM Transactions on Networking (TON)}, 17(2):459--472,
  2009.

\bibitem{krioukov2004compact}
Dmitri Krioukov, Kevin Fall, and Xiaowei Yang.
\newblock Compact routing on internet-like graphs.
\newblock In {\em Proc. IEEE INFOCOM}. IEEE, 2004.

\bibitem{semi}
Praveen Kumar, Yang Yuan, Chris Yu, Nate Foster, Robert Kleinberg, Petr
  Lapukhov, Chiun~Lin Lim, and Robert Soul{\'e}.
\newblock Semi-oblivious traffic engineering: The road not taken.
\newblock In {\em 15th {USENIX} Symposium on Networked Systems Design and
  Implementation ({NSDI} 18)}, pages 157--170, Renton, WA, April 2018. {USENIX}
  Association.

\bibitem{racke2002minimizing}
Harald Racke.
\newblock Minimizing congestion in general networks.
\newblock In {\em Proc. 43rd Annual IEEE Symposium on Foundations of Computer
  Science (FOCS)}, pages 43--52. IEEE, 2002.

\bibitem{racke2008optimal}
Harald R{\"a}cke.
\newblock Optimal hierarchical decompositions for congestion minimization in
  networks.
\newblock In {\em Proc. 40th Annual ACM Symposium on Theory of Computing
  (STOC)}, pages 255--264. ACM, 2008.

\bibitem{raecke2018compact}
Harald R{\"{a}}cke and Stefan Schmid.
\newblock Compact oblivious routing.
\newblock In {\em Proceedings of the 27th European Symposium on Algorithms
  (ESA)}, 2019.

\bibitem{racke2014computing}
Harald R{\"a}cke, Chintan Shah, and Hanjo T{\"a}ubig.
\newblock Computing cut-based hierarchical decompositions in almost linear
  time.
\newblock In {\em Proc. 25th Annual ACM-SIAM Symposium on Discrete Algorithms
  (SODA)}, pages 227--238. Society for Industrial and Applied Mathematics,
  2014.

\bibitem{retvari2013compact}
G{\'a}bor R{\'e}tv{\'a}ri, Andr{\'a}s Guly{\'a}s, Zal{\'a}n Heszberger,
  M{\'a}rton Csernai, and J{\'o}zsef~J B{\'\i}r{\'o}.
\newblock Compact policy routing.
\newblock {\em Distributed computing}, 26(5-6):309--320, 2013.

\bibitem{thorup2001compact}
Mikkel Thorup and Uri Zwick.
\newblock Compact routing schemes.
\newblock In {\em Proceedings of the 13th ACM Symposium on Parallelism in
  Algorithms and Architectures (SPAA)}, SPAA 01, pages 1--10, New York, NY,
  USA, 2001. Association for Computing Machinery.

\bibitem{towles2002worst}
Brian Towles and William~J Dally.
\newblock Worst-case traffic for oblivious routing functions.
\newblock In {\em Proc. 14th Annual ACM Symposium on Parallel Algorithms and
  Architectures (SPAA)}. ACM, 2002.

\bibitem{Va82}
Leslie~G. Valiant.
\newblock A scheme for fast parallel communication.
\newblock {\em SIAM Journal on Computing}, 11(2):350--361, 1982.

\bibitem{VB81}
Leslie~G. Valiant and Gordon~J. Brebner.
\newblock Universal schemes for parallel communication.
\newblock In {\em Proceedings of the 13th ACM Symposium on Theory of Computing
  (STOC)}, pages 263--277, 1981.

\bibitem{van1995compact}
Jan van Leeuwen and Richard~B Tan.
\newblock Compact routing methods: A survey.
\newblock In {\em Proc. Colloquium on Structural Information and Communication
  Complexity (SICC)}, pages 99--109, 1995.

\end{thebibliography}

\end{document}